\newtheorem{remark}{{\it Remark}\rm}
\newtheorem{lemma}{Lemma}
\newtheorem{definition}{Definition}
\newtheorem{theorem}{Theorem}
\newtheorem{assumption}{Assumption}
\def\begmat#1{\begin{bmatrix}#1\end{bmatrix}}
\def\vbf#1{\boldsymbol{#1} }
\def\cale{{\cal E}}
\def\cali{{\cal I}}
\def\calo{{\cal O}}
\def\calc{{\cal C}}
\def\calt{{\cal T}}
\def\calb{{\cal B}}
\def\calm{{\cal M}}
\def\cale{{\cal E}}
\def\calv{{\cal V}}
\def\bfr{{\vbf r}}
\def\bfq{\vbf{q}}
\def\bfy{{\bf y}}
\def\bfz{{\bf z}}
\def\bfx{{\bf x}}
\def\bfu{{\bf u}}
\def\bfv{{\bf v}}
\def\bfa{{\bf a}}
\def\bfw{\vbf{w}}
\def\chib{\vbf\chi}
\def\ellb{\vbf\ell}
\def\thetab{\vbf{\theta}}
\def\xib{\vbf{\xi}}
\def\Omegab{\vbf\Omega}
\def\phib{ \phi}
\def\phil{{\phi}^{\mathtt{L}}}
\def\pa{\mbox{skew}}
\def\L2e{{\cal L}_{2e}}
\def\rea{\mathbb{R}}
\def\adj{\mbox{adj}}
\def\col{\mbox{col}}
\def\hal{{1 \over 2}}
\def\et{\epsilon_t}
\def\min{{\mbox{min}}}
\def\li{{{\ell_i}}}
\def\calc{{\cal C}}
\def\calb{{\cal B}}
\def\cale{{\cal E}}
\def\calm{{\cal M}}
\def\caln{{\cal N}}
\def\calt{{\cal T}}
\def\calo{{\cal O}}
\def\bq{\mathbb{Q}}
\def\hal{{1 \over 2}}
\def\col{\mbox{col}}
\def\L2{{\cal L}_2}
\def\L2e{{\cal L}_{2e}}
\def\rea{\mathbb{R}}
\def\begequarr{\begin{eqnarray}}
\def\endequarr{\end{eqnarray}}
\def\begequarrs{\begin{eqnarray*}}
\def\endequarrs{\end{eqnarray*}}
\def\begarr{\begin{array}}
\def\endarr{\end{array}}
\def\begequ{\begin{equation}}
\def\endequ{\end{equation}}
\def\begenu{\begin{enumerate}}
\def\begite{\begin{itemize}}
\def\endite{\end{itemize}}
\def\endenu{\end{enumerate}}
\def\lef[{\left[\begin{array}}
\def\rig]{\end{array}\right]}
\def\begcen{\begin{center}}
\def\endcen{\end{center}}
\def\ki{k_{\tt I}}
\def\tr{\mbox{tr}}
\def\QED{\hfill $\blacksquare$}
\def\qed{\hfill $\triangleleft$}
\def\begmat#1{\begin{bmatrix}#1\end{bmatrix}}
\newcommand{\algocomment}[1]{{\hfill\ttfamily\footnotesize\fontdimen2\font=0.3em\textcolor{magenta}{/*#1*/}}}
\def\BibTeX{{\rm B\kern-.05em{\sc i\kern-.025em b}\kern-.08em
    T\kern-.1667em\lower.7ex\hbox{E}\kern-.125emX}}
\begin{document}

\title{
PEBO-SLAM: Observer Design for Visual Inertial SLAM with Convergence Guarantees
}

\author{
Bowen Yi, Chi Jin, Lei Wang, Guodong Shi, Viorela Ila, and Ian R. Manchester%%
% <-this % stops a space
\thanks{This paper was supported by the Australian Research Council (ARC). L. Wang was supported by the National Natural Science Foundation of China (62203386), and Zhejiang Provincial Natural Science Foundation of China (LZ23F030008). {\em (Corresponding author: Lei Wang)}
}% <-this % stops a space
\thanks{B. Yi, G. Shi, V. Ila and I.R. Manchester are with the Australian Centre for Robotics, School of Aerospace, Mechanical and Mechatronic Engineering, The University of Sydney, Sydney, NSW 2006, Australia. (e-mail: bowen.yi@uts.edu.au, \{guodong.shi,~viorela.ila,~ian.manchester\}@sydney.edu.au)}
\thanks{
C. Jin is with the TuSimple Inc., Liangmaqiao Road, Beijing, China. (e-mail:  jinchitrue@163.com) }%
\thanks{
L. Wang is with State Key Laboratory of Industrial Control Technology, Institute of Cyber-Systems and Control, Zhejiang University, Hangzhou 310027, China (e-mail:  lei.wangzju@zju.edu.cn)
}
}

\maketitle
\thispagestyle{empty}
%\pagestyle{empty}

%%%%%%%%%%%%%%%%%%%%%%%%%%%%%%%%%%%%%%%%%%%%%%%%%%%%%%%%%%%%%%%%%%%%%%%%%%%%%%%%
\begin{abstract}
This paper introduces a new linear parameterization to the problem of visual inertial simultaneous localization and mapping (VI-SLAM)---without any approximation---for the case only using information from a single monocular camera and an inertial measurement unit. In this problem set, the system state evolves on the nonlinear manifold $SE(3)\times \rea^{3n}$, on which we design dynamic extensions carefully to generate invariant foliations, such that the problem can be reformulated into online \emph{constant parameter} identification, then interestingly with linear regression models obtained. It demonstrates that VI-SLAM can be translated into a linear least squares problem, in the deterministic sense, \emph{globally} and \emph{exactly}. Based on this observation, we propose a novel SLAM observer, following the recently established parameter estimation-based observer (PEBO) methodology. A notable merit is that the proposed observer enjoys almost global asymptotic stability, requiring neither persistency of excitation nor uniform complete observability, which, however, are widely adopted in most existing works with provable stability but can hardly be assured in many practical scenarios. 
\end{abstract}

\begin{IEEEkeywords}
Nonlinear observer, parameter estimation-based observer, SLAM, Least squares
\end{IEEEkeywords}

%
%%%%%%%%
\section{Introduction}
\label{sec1}
%%%%%%%%
%
\subsection{Literature review}
Simultaneous localization and mapping (SLAM) is a fundamental problem widely studied in the robotic and automation communities, as well as in the field of navigation \cite{MONTHR,DISetal,SUetal,VANetal,ZLOFOR}, see \cite{CADetal} for a recent review. In SLAM, two main aims are concerned to be accomplished concurrently---mapping an unknown environment, and online estimating the pose (i.e. attitude and position) of a mobile robot. Nowadays, SLAM is becoming an significantly important part for unmanned systems in the absence of absolute positioning systems.

The algorithmic approaches at the SLAM back end may generally be classified into two categories, i.e., smoothing and filtering, and the interested reader may refer to \cite{STRetal} for a comprehensive comparison. The former is commonly referred as \emph{maximum a posteriori} (MAP) estimation, or smoothing and mapping (SaM), which can historically date back to the pioneer work \cite{LUetal}, and often be represented using factor graphs. In smoothing methods, state estimation is treated as batch optimization. To be precise, with the aim to estimate the system state $\bfx(k)$ at the current moment $k \in \mathbb{N}_+$, we collect all its past trajectories in an extended state $X:=[\bfx(1)^\top , \ldots, \bfx(k)^\top]^\top$ and formulate SLAM as maximizing the posterior under some Gaussian assumptions, given a set of historical record of measurement $\bar\bfy:=\{\bfy(j), j=1,\ldots,k\}$ as well as the kinematic and observation models. Using the Bayes' theorem and a uniform distribution assumption, the MAP estimation reduces to maximum likelihood estimation, or equivalently a nonlinear least squares problem \cite[Sec. 2]{CADetal}. Compared to filtering, this approach usually has a better accuracy in particular for large-scaled mappings; however, on the other hand, it has a serious issue of scalability, i.e., heavy memory consumption and growing size of the factor graph. Such an issue has been partially solved by utilizing sparsification \cite{ILAetalTRO}, Parallel (out-of-core) SLAM \cite{BOSetal} and incremental smoothing \cite{KAEetal,ILAetal}. In contrast, the filtering approach relies on recursive or incremental algorithms to get the pose and features estimation asymptotically, in which only current pose is in need to be estimated, and thus the dimension of the pose state will not grow boundlessly over time. This school includes extended Kalman filter (EKF)-SLAM and many modern algorithms, e.g. FastSLAM, which combines EKF and particle filtering \cite{MONTHR}. An essential part of these algorithms is their convergence and consistency analysis, the success of which relies on linearization of nonlinear dynamical models \cite{HUADIS}. It sometimes yields satisfactory performance, but may have inconsistency issues when starting from a bad initial guess, which is caused by small domains of attraction in terms of approximation---invoking high nonlinearity of the associated dynamics and output functions.

While new state-of-the-art techniques continue to come up these years, some long-standing technical and theoretical challenges are still open. As reviewed above, a key factor stymieing the performance enhancement in practice may refer to non-convexity in smoothing and locality in filtering; see for example \cite{WANetalAUT} for an attempt of understanding the nonlinearity structure in one-step SLAM. With such a consideration, a natural question openly arises in \cite{HUAetal} where it is stated:
``{\em How far is SLAM from a linear least squares problem?}'' The main purpose of this paper is to give an affirmative answer to the above question in the context of visual inertial SLAM \emph{without} involving any linearization or approximation. Its importance can hardly be overestimated, since, on one hand, it may help us have a better understanding of nonlinearity in SLAM, and on the other hand it provides the possibility to enhance performance in practice. Indeed, many efforts have been made towards this target. In \cite{ZHAetal}, the authors consider local submap joining via solving linear least squares problems, which is an approximation to the optimal full nonlinear least squares SLAM, though non-convex optimization is adopted to obtain submaps. Meanwhile, some very recent works \cite{LOUetal,TANetal,GUEetal}, from the filtering perspective, shows that \emph{robo-centric} SLAM problem, i.e., estimating landmark coordinates in the body-fixed frame, can be \emph{exactly} transformed into state observation of linear time-varying (LTV) systems, via judiciously selecting alternative outputs and systems state. Then, global asymptotic convergence can be achieved using LTV Kalman-Bucy filters, if the robot trajectory guarantees uniform complete observability (UCO). Our paper continues along this line of research as \cite{LOUetal,TANetal}, but with the key difference of focusing on \emph{world-centric} SLAM.

In recent years, the nonlinear control community shows growing interests for SLAM, providing alternative solutions by means of nonlinear observer design. Indeed, the problem may be regarded as state observation of a nonlinear system living on the manifold $SE(3) \times \rea^{3n}$, whose output function depends on specific sensors. On the other hand, nonlinear observer on manifolds is a well-established topic, with special emphasis to matrix Lie groups \cite{IZASAN,LAGetal,MAHetal}. In this paper, we are interested in state observer design for VI-SLAM, a case with only bearing measurement of features and velocities available. Similar problems were recently studied in \cite{VANetal}, in which the authors introduce a constructive observer design by lifting to a new symmetry Lie group $\mathbb{VSLAM}(3)$ in order to make the output function equivariant. Since the model is not strongly differentially observable \cite{BES}, in order to be able to achieve asymptotic stability, some persistency of excitation (PE) conditions are required for the robot trajectory; see for example \cite{LOUetal,TANetal} imposing the UCO-type assumptions. It is well known that UCO of LTV systems is equivalent to PE of some intermediate variables \cite{SASBOD}. Besides, the PE and UCO conditions are also indispensable in some related problems, e.g., locolization using range or direction measurements \cite{HAMSAM}, velocity estimation using normalized measurement \cite{BJOetal} and feature depth observation \cite{DELUCA}. Theoretically, the absence of UCO or PE is yet another source of inaccuracy and inconsistency---apart from nonlinearity and non-convexity as illustrated above---in filtering approaches. Intuitively, these excitation conditions impose \emph{relative motion} between the mobile robot and features, the \emph{uniformity} of which should hold with respect to time. However, such assumptions may not be satisfied in many practical scenarios, such as, robots stopping in specific tasks, and features appearing in camera only during a short interval. Under these circumstances, estimates from the existing SLAM observers fail to converge to their true values; and observer states may even diverge in the presence of measurement noise. In contrast, smoothing approaches only require some ``informative'' assumptions without the need of uniformity of time, since it solves optimization in each step directly rather than in a recursive manner. Overcoming the inconsistency from the lack of sufficient excitation for the filtering approach is another motivation of the paper.

\subsection{Overview and Paper Organization}
\label{sec1b}

There are many different ways to formulate SLAM problems, among which smoothing and filtering/observer are two de facto standard frameworks. In the former, it is formulated as a large batch stochastic optimization problem; and in filtering the problem is addressed as information fusion. In this work, we aim to provide a filtering solution to estimate the entire trajectory (pose) and the map.  

The proposed PEBO-SLAM framework can be distinguished from conventional smoothing approaches (based on nonlinear optimization) and incremental/recursive SLAM filters (e.g. nonlinear observers and several variants of EKFs), i.e., the proposed approach relies on using dynamic extensions and then treating visual inertial SLAM as a linear squares problem without any approximation. After obtaining linear regressor equations (LREs), instead of solving off-line optimization directly we provide a consistent on-line estimator. The PEBO-SLAM framework enjoys three key merits, i.e., low computational burden, less data memory, and (almost) global stability guarantees. Although the algorithm in the paper are tailored for a specific SLAM problem, i.e., three-dimensional visual inertial SLAM, the PEBO-SLAM framework is applicable to many other SLAM problems.

The paper is organized as follows. In the remainder of this section, we will summarize the main contributions, and introduce mathematical notations. In Sections \ref{sec2s} and \ref{sec2} we present some preliminaries of PEBO and the problem formulation studied in the paper, respectively. In Section \ref{sec3}, a novel visual inertial SLAM observer framework will be designed, and then we propose a feasible solution in Section \ref{sec4}. It will be followed by some simulation results in Section \ref{sec5}. Finally, the paper is wrapped up by a brief concluding remark. %; however, the current paper goes beyond the conference version \cite{YIetalcdc} developing the \blue{XXXXX}. 

\subsection{Contributions}

We consider the mobile robot moving in three-dimensional space, and our task is to asymptotically estimate the robot pose and feature positions, only using the information of landmark bearings observed from a single monocular camera, and the odometry from an inertial measurement unit (IMU). The constructive nonlinear observer tool which we adopt is the recently proposed \emph{parameter estimation-based observer} (PEBO) \cite{ORTetalscl}; see \cite{ORTetalaut} for a generalized version, and \cite{YIetal} for its geometric interpretation. In the paper, we show that PEBO can be extended from Euclidean space to Lie matrix groups. We call the proposed design framework as PEBO-SLAM. 

The basic idea in PEBO-SLAM is to translate state estimation into online \emph{constant} parameter identification. Though smoothing approaches rely on parameter estimation, we underline the radical difference between smoothing and the proposed method. The former relies on batch optimization, in which all historical poses are estimated in each step, known as full-information estimation; however, in the proposed method we design a dynamic extension to generate invariant foliations, and then obtain some ``invariance'', with only the current pose to be estimated on-line. The fact makes the proposed approach belong to the filtering category. 
 
The main contributions of the paper are threefold.
\begin{itemize}
    \item[\bf C1] Showing that the VI-SLAM problem can be formulated as online parameter estimation via a PEBO design on nonlinear manifolds, then obtaining a linear least squares problem without any approximation. %In this way, we give an affirmative answer to the open question in \cite{HUAetal}.
    
    \item[\bf C2] Providing a simple VI-SLAM observer design, which enjoys almost global asymptotic stability, in contrast to locality in EKF-SLAM. Hence, it overcomes the issue of inconsistency and inaccuracy from bad initial guess, and enjoys low computational complexity with linear growth with respect to the number of features.
    
    \item[\bf C3] Relaxing significantly the PE or UCO condition required in some recent results, e.g., \cite{BJOetal,LOUetal,TANetal}, and only requiring an extremely weak assumption, i.e. interval excitation (IE). It means that PEBO-SLAM can still have satisfactory accuracy and consistency when the mobile robot moves along some \emph{non-uniformly} observable trajectories.   
\end{itemize}

An abridged version of was presented in the 60th IEEE Conference on Decision and Control \cite{YIetalcdc}. While the conference version primarily focused on the core idea, this extended version includes several additional contributions, including the proof of Lemma \ref{lem:1}, the convergence rate estimate in Theorem \ref{prop:mapping}, the proof of Theorem \ref{prop:pose-observer} and more simulation results.

\subsection{Notations}

Throughout the paper, the arguments and subscripts are omitted when clear from context. $\et$ represents exponentially decaying terms with proper dimensions. Given $A \in \rea^{n\times n}$ and $S\in \rea^{n\times n}_{\succeq 0}$, the Frobenius norm is defined as $\|A\| = \sqrt{\tr(A^\top A)}$, and $S^\hal$ and $\adj\{A\}$ represent the matrix square root and the adjugate matrix, respectively. The skew-symmetric projector $\pa(\cdot)$ is defined as $\pa(A) = \hal(A-A^\top)$. We use $|\cdot|$ to denote Euclidean norms of vectors or its induced matrix norm. Given two symmetric matrices $A$ and $B$ in $\rea^{n\times n}$, we write $A \preceq B$, if for all $\bfx\in\rea^n$, $\bfx^\top A \bfx \le \bfx^\top B \bfx$. We use $SO(3)$ to represent the special orthogonal group, and ${\mathfrak {so}}(3)$ is the associated Lie algebra as the set of skew-symmetric matrices satisfying $SO(3)=\{R\in \rea^{3\times3}|R^\top R = I_3, ~ \det(R) =1\}$. Given $\mathbf{a} \in \rea^3$, we define the operator $(\cdot)_\times$ as 
$$
\mathbf{a}_\times := \begmat{ 0 & - a_3 & a_2 \\  a_3 & 0 & -a_1 \\ -a_2 & a_1 & 0 } \in {\mathfrak {so}}(3) .
$$
We also consider the special Euclidean group denoted as $SE(3) = \{ \calt(R,\bfx)\in \rea^{4\times 4}|R\in SO(3),~ \bfx\in \rea^3\}$ with 
\begequ
\label{calt}
\calt(R,\bfx) = \begmat{R & \bfx \\ 0  & 1}.
\endequ
The Lie algebra of $SE(3)$ is defined as 
$$
{\mathfrak{se}}(3):= \left\{ A \in \rea^{4\times 4} \Bigg|A= \begmat{\Omegab_\times & \mathbf{v}\\ 0 & 0}, \Omegab_\times\in \mathfrak{so}(3), \bfv \in \rea^3 \right\}.
$$
For any $\bfx\in \rea^3$, $\bfa\times \bfx$ is the vector cross product, satisfying $\bfa_\times \bfx = \bfa \times \bfx$. For any $\bfx\in \rea^3/\{0\}$, its projector is defined as
$
\Pi_{\bfx} :=I_3 -{1\over|\bfx|^2}\bfx\bfx^\top,
$
which projects a given vector onto the subspace orthogonal of $\bfx$, since $\Pi_{\bfx} \bfx = 0$. We define a wedged mapping
\begequ\label{U}
\bfu^\wedge = \begmat{\Omegab_\times & \bfv \\ 0 & 0}
\endequ
for a vector $\bfu:=\col(\Omegab,\bfv)\in \rea^6$. 

%
%%%%%%%%
\section{Preliminaries}
\label{sec2s}
%%%%%%%%
%

To streamline the presentation, let us recall some preliminaries about the PEBO methodology in Euclidean space \cite{ORTetalscl}. We are particularly interested in estimation of unknown states via recursive algorithms. Consider the nonlinear model
\begin{equation}
\label{NLS}
\dot \bfx   =  f(\bfx ,\bfu)
, \quad 
\bfy   =  h (\bfx)
\end{equation}
with the systems state $\bfx \in \rea^{n_x}$, $\bfu\in \rea^{n_u}$ the input and $\bfy \in \rea^{n_y}$ the output. The target is to  asymptotically estimate unknown $\bfx$, assuming that both $\bfy$ and $\bfu$ are available. In the context of observer, the estimate $\hat\bfx  \in \rea^{n_x}$ is generated from  
\begin{equation}
\label{observer:general}
\begin{aligned}
 	\dot{\eta} & ~ = ~{F}( \eta,\bfy,\bfu)  \\
 	\hat \bfx & ~ = ~{N}(\eta,\bfy,\bfu) 
 \end{aligned}
\end{equation}
with two functions $F$ and $N$ to be designed, such that 
\begin{equation}
\label{convergence}
\lim_{t\to \infty} |\hat \bfx(t) - \bfx(t)| = 0.
\end{equation}
It can be designed in continuous time and then implemented via discretization, or be designed as $\eta_{k+1} = {F}(\eta_k,\bfy_k,\bfu_k)$ in discrete time directly.

The most popular estimators in the robotics community may refer to extended Kalman filter (EKF) and Luenberger-like observers \cite{BES}; see \cite{BERetalREV} for a recent review of nonlinear observers. In EKFs, the observer state $\eta$ consists of the estimate $\hat \bfx$ of unknown variables and the predicted covariance estimate $P$, and $F$ contains the prediction and update functions. In Luenberger-like observers, the state $\eta$ is usually selected as the estimate $\hat\bfx$ directly. If the system satisfies uniform complete observability along given robot trajectories, we may guarantee the convergence $|\hat \bfx - \bfx | \to 0$ locally with asymptotic stability. Note that in many cases the change of coordinates plays a very important roles in observer design \cite{KAZKRA,BES,YIetaltac22}.%, and in some cases the latter may even achieve global convergence. To address this, we need to impose \emph{asymptotic stability} to the error dynamics.  %It is underlined that, even though the EKF considers the estimation problem in a stochastic manner, it is also realised by some \emph{deterministic} ordinary differential equations \eqref{observer:general} (or difference equations in discrete time). 

In contrast to the design of an asymptotically stable error dynamics in EKF or Luenberger-like observers, a radically new method called PEBO was proposed in \cite{ORTetalscl}, which relies on generation of invariant foliations to translate the state estimation problem into one of estimation of constant, unknown parameters. Its first step is to find a smooth change of coordinate $\bfx\mapsto \bfz = \phi(\bfx) \in \rea^{n_z}$ ($n_z \ge n_x$), in which the time derivative is \emph{measurable}, i.e.
\begin{equation}
\label{dotz_pebo}
\dot \bfz = H(\bfy,\bfu)
\end{equation}
for some function $H$ with proper dimensions. It is equivalent to solve the partial differential equation (PDE)
\begin{equation}
\label{pde:pebo}
{\partial {\phib} \over \partial \bfx}(\bfx,\bfu){f}(\bfx,\bfu) = H({h}(\bfx),\bfu).
\end{equation}
Then, we establish, via introducing a pure integral action
\begin{equation}
\label{dot_xi}
\dot\xib = H(\bfy,\bfu),
\end{equation}
the relation
\begequ
\label{phi_xi_theta}
\phi(\bfx(t)) = \xib(t) + {\thetab}, \quad \forall t \ge 0
\endequ
with a \emph{constant} vector 
$
\thetab:= \phib(\bfx(0)) - \xib(0).
$ 
If $\thetab$ is known and $\phib$ is injective, we would get the true state as $ \bfx= \phil(\xib+\thetab, \bfy)$ with $\phil$ the left inverse of $\phib$. Hence, the remaining task is to identify the constant vector $\thetab$ online, denoted as $\hat \thetab$, then obtaining the estimate
$
\hat \bfx = \phil(\xib + \hat\thetab, \bfy).
$
To this end, we rely on the existence of the regression model\footnote{High-order derivatives of the output $y$ may also be involved to yield a regression model. For such a case, some linear time-invariant filter can be used to remove the difficulty of calculate dirty derivatives \cite{SASBOD}.}
\begequ
\label{output_regressor}
\bfy = {h}\big(\phil(\xib+\thetab,\bfy) \big)
\endequ
with the signals $\xib$ and $\bfy$ available. An online parameter estimator for the regression model \eqref{output_regressor} and the dynamic extension \eqref{dot_xi}, with the observer output $\hat \bfx = \phil (\xib + \hat\thetab , \bfy)$, is called a PEBO for the system \eqref{NLS}. In Section \ref{sec3}, we extend this idea from Euclidean space to the special Euclidean group, and then apply to the VI-SLAM problem.

%%%%=======

\begin{remark}\rm The key idea in PEBO is to use the pure integral action \eqref{dot_xi} to generate an invariant foliation $\{(\bfx,\xib)\in \rea^{2n_x}| \phi(\bfx)= \xib+\thetab, \; \thetab \in \rea^{n_x}\}$ \cite{YIetal}, and then an online estimator is needed to identify the leaf of the foliation in which the systems state lives. Fig. \ref{fig:pebo} presents an intuitive illustration.
\begin{figure}[!htb]
\centering
\includegraphics[width=0.6\linewidth]{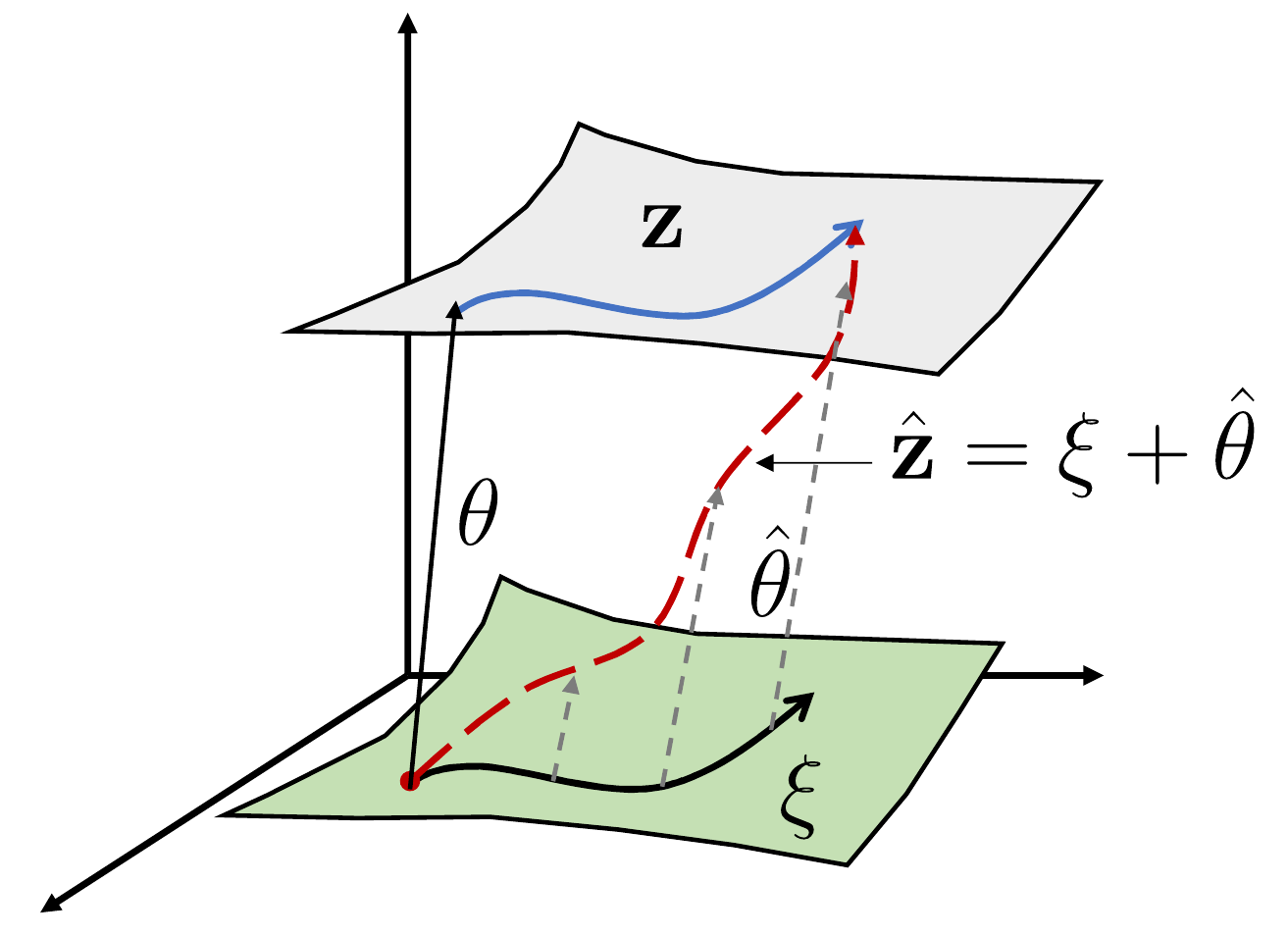}
\label{fig:pebo}
\caption{A geometric illustration of PEBOs}
\end{figure}
\end{remark}

%%%%=======

\begin{remark}\rm
The PDE \eqref{pde:pebo} in PEBO is similar to the one in nonlinear contracting observers \cite{MAN,YIetaltac22}, i.e., ${\partial \phi \over \partial \bfx}(\bfx,\bfu) = H(\bfx,h(\bfx),\bfu)$. The difference relies on that in the latter $H(\cdot)$ is also a function of the transformation $\phi(\bfx)$, and we impose a contraction (a.k.a. incremental exponential stability) constraint in the new coordinate. This fact provides a possible way to extend the nonlinear contracting observer framework in \cite{YIetaltac22} only requiring incremental stability---but not asymptotic---in the transformed coordinate.
\end{remark}

%
%%%%%%%%
\section{Problem Statement}
\label{sec2}
%%%%%%%%
%
%

\subsection{Kinematic Model}

The kinematics of a rigid body robot with is given by
\begin{equation}
\label{kinematics}
	\dot \bfx   =  R \bfv , \quad 
	\dot R   =  R \Omegab_\times.
\end{equation} 
%%%%%%%%%%%%%
\begin{table}[!htb]
\begin{tcolorbox}[
%colframe=black!10,
colback=white!10,
coltitle=blue!20!black,  
%title= 
]
\begin{center}
 {\bf \small Nomenclature}
\end{center}
\vspace{0.2cm}
  \renewcommand\arraystretch{1.4}
\small
\begin{tabular}{ll}
$\hat{(\cdot)}$ & Estimate of a variable or parameter \\
$\tilde{(\cdot)}$ & Estimation error\\
$I_{n}$ & {${n\times n}$ identity matrix} \\
$\small\{\cali,\calb,\calv\}$ & Frames of inertial, body and dynamic \\
& extension \\
$\bfx(t) \in \rea^3$ & Robot position at $t$ \\
$\bfv(t) \in \rea^3$ & Linear velocity in $\{\calb\}$ \\
$\Omegab(t) \in \rea^3$ & Rotational velocity in $\{\calb\}$\\
$R(t) \in SO(3)$ & Robot attitude matrix $\tensor[^I]{R}{_B}$ \\
$X(t) \in SE(3)$ & Rigid-body pose $X:=\calt(R,\bfx)$\\
${}^vX(t) \in SE(3)$ & Dynamic extended state ${}^vX  :=\calt(Q,\xib)$\\
${}^cX\in SE(3)$ & Constant rigid transformation ${}^v\mathbf{T}_I$ with \\
& the decomposition ${}^c X:= \calt({}^cQ ,{}^c\xib)$ 
\\
$Q(t)\in SO(3)$ & Attitude matrix in $\{\calv\}$\\
$\xib(t)\in \rea^3$ & Robot position in $\{\calv\}$\\
$\bfu(t) \in \rea^6$ & Velocity $\bfu=\col(\Omegab,\bfv)$ in $\{\calb\}$\\
%$[A,B] \in \rea^{n\times n}$ & Matrix commutator $[A,B] = AB-BA$\\
${\ellb}_i, {}^v\ellb_i \in \rea^3$ & Position of the $i$-th landmark in $\{\cali\}$, $\{\calv\}$\\
$\bfy_i, {}^v\bfy_i \in \rea^3$ & Bearing of the $i$-th landmark in $\{\calb\}$, $\{\calv\}$ \\
$\caln \subset \mathbb{N}$ & Index set of landmarks $\{1,\ldots, n\}$
%\\
%& \mage{to be completed}
\end{tabular}
\end{tcolorbox}
\end{table}
All the definitions of symbols and the spaces where they live in can be found in the table of Nomenclature. We assume that there are $n$ feature points appearing in the field view of camera, the coordinates $\ellb_i$ of which in the inertial frame $\{\cali\}$ are constant, thus satisfying
\begin{equation}
\label{dyn:ldmk}
\dot \ellb_i = 0, \quad  i\in \caln :=\{1,\ldots, n\} \subset \mathbb{N}.
\end{equation}
The robot position $\bfx$ is the coordinate of the origin of the body-fixed frame $\{\calb\}$ relative to the origin of the inertial frame $\{\cali\}$. Here, we assume that the velocities $\bfv$ and $\Omega$ are bounded for $t\ge 0$, and guarantee the dynamics \eqref{kinematics} forward complete. The rigid body pose contains the attitude $R \in SO(3)$ and the position $\bfx \in \rea^3$, and with a slight abuse of notation we use $X \in SE(3)$ to represent the pose, which is given by $X:= \calt(R,\bfx)$ with the function $\calt$ defined in \eqref{calt}. Hence, we have its dynamics
\begin{equation}
\label{dot_X}
\dot X = X \bfu^\wedge
\end{equation}
with $\bfu=\col(\Omegab,\bfv)$ containing rotational and linear velocities. In the visual inertial SLAM problem only monocular cameras and IMUs are equipped on robotics, only providing the information of landmark bearings and the velocity $\bfu$. Then, the system output is given by
\begin{equation}
\label{output}
\bfy_i %= {h}_i(X,\ellb_i) 
= R^\top {\ellb_i - \bfx \over |\ellb_i - \bfx|}.
\end{equation}
%It is clear that the unit vector $[\li -x]/|\li-x| \in \cals^2$ contains the orientation of the relative vector $(\li-x)$. 
For convenience, we write all bearing measurements and landmark coordinates in vectors
$$
\begin{aligned}
\vbf{Y}  :=\begmat{\bfy_1 \\ \vdots \\ \bfy_n},
\quad 
\vbf{L}  :=\begmat{\ellb_1 \\ \vdots \\ \ellb_n}.
\end{aligned}
$$
Though the dynamics \eqref{dot_X} is linear, the output functions $ {h}_i$ are highly nonlinear and the state space is a nonlinear manifold, thus making \eqref{dot_X}-\eqref{output} a nonlinear system.

\subsection{Problem Statement}

In this paper, we are concerned with the VI-SLAM problem, i.e., estimating the pose $X(t)\in SE(3)$ and the landmark coordinates $\ellb_i\in \rea^3$ ($i\in \caln$) from the measurement \eqref{output} in real time, under the kinematic constraint \eqref{kinematics}, and the criterion is to make their estimates as close to the true values as possible.

As mentioned in Introduction, there are many merits to solve the above problem by employing the filter/observer approach, e.g., high computational efficiency and low data memory usage. Along this line, our task becomes the following deterministic observer design for VI-SLAM.

\ \\
{\bf  Problem 1.} ({\em Visual inertial SLAM observer}) Consider the dynamical model \eqref{kinematics} with the output $\vbf Y$ and the input $\bfu$. Design a continuous-time observer in the form
\begin{equation}
\label{obs:general}
 \begin{aligned}
 	\dot{ \eta} & ~ = ~ F(\eta, \vbf{Y},\bfu)\\
 	(\hat X,\hat{\vbf L}) & ~ = ~ N( \eta, \vbf{Y},\bfu) 
 \end{aligned}
\end{equation}
with the variables $\hat X \in SE(3)$ and $\hat{\vbf L} \in \rea^{3n}$, and the functions $F$ and $N$ to be designed, guaranteeing 
\begin{equation}
\label{convergence}
\lim_{t\to \infty}\Big[ \big\|\hat X(t) - X(t)\big\| + \big|\hat{\vbf L}(t) - {\vbf L}\big| \Big] = 0, 
\end{equation}
under some ``informative trajectories'' of the robot, and identify the convergence condition of the proposed observer.
\qed

For on-line estimation problems, it is widely recognized that excitation conditions can be used to characterize ``informative trajectories''. Here, let us recall some definitions as follows, including persistency of excitation (PE) and interval excitation (IE). Clearly, IE is strictly weaker than PE, since the former does not require uniformity in time.

\begin{definition}\label{def1}\rm
({\em PE, IE}) A bounded signal $\phi:\rea_+ \to \rea^n$ is
\begin{itemize}
  \item[-] $(T,\delta)$-PE, if
    \begin{equation}
    \label{def_pe}
     \int_{t}^{t+T} \phi(s)\phi^\top(s) ds \succeq \delta I_n, 
     \quad
     \forall t\ge 0
    \end{equation}
    for some $T>0, \delta >0$.

    \item[-] $(t_0,t_c,\delta)$-IE if there exist $t_0 \ge 0$ and $t_c \ge 0$ such that
    \begin{equation}
    \label{def_ie}
    \int_{t_0}^{t_0+t_c} \phi(s)\phi^\top(s) ds \succeq \delta I_n
    \end{equation}
    for some $\delta >0$. \qed
\end{itemize}
\end{definition}

An LTV system $\dot \bfx = A(t) \bfx$, $\bfy = C(t)\bfx$ is uniformly completely observable (UCO), if the matrix $[C(\cdot)\Phi_A(\cdot,t_0)]^\top$ is PE with $\Phi_A(\cdot,\cdot)$ the state transition matrix \cite{SASBOD}.

%
%%%%%%%%
\section{PEBO-SLAM Framework}%VI-SLAM as a Linear Least Squares Problem
\label{sec3}
%%%%%%%%
%
\subsection{A Parameter Estimation Approach to VI-SLAM}

In this section, we propose the PEBO-SLAM framework to achieve \eqref{convergence}.

As shown above, one of the key steps in PEBO is to generate invariant foliations among the system states and dynamic extensions. In our case, the systems state evolves on the nonlinear manifold $SE(3) \times \rea^{3n}$. In the SLAM context, we construct a dynamic extension
\begin{equation}
\label{dyn_ext1}
{}^{v}\dot {X} = {}^vX \bfu^\wedge
\end{equation}
with the decomposition
\begin{equation}
\label{Xv}
{}^vX:=\calt(Q,\xib)\in SE(3),
\end{equation}
which resembles the pure integral action \eqref{dot_xi} in Euclidean space. We refer the pose ${}^vX$ living in the dynamic extension frame $\{\calv\}$. 

We have the algebraic relation in Lemma \ref{lem:1} below, which may be viewed as an extension of the key identity \eqref{phi_xi_theta} in PEBO from Euclidean space to matrix Lie groups. A similar idea was employed in our previous work \cite{yi2023attitude}.
\begin{lemma}
\label{lem:1} \rm
Consider the dynamics \eqref{kinematics}. The dynamic extension \eqref{dyn_ext1} is forward complete, and there exists a constant matrix ${}^c X = \calt({}^c Q,{}^c\xib) \in SE(3)$ satisfying  
\begin{equation}
\label{id:1}
{}^vX(t) ={}^c  X X(t), \quad \forall t\ge 0,
\end{equation}
with ${}^cQ = Q(0)R(0)^{-1}$ and ${}^c \xib = \xib(0) - {}^c Q \bfx(0)$.
\end{lemma}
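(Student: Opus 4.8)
The plan is to exploit the fact that both the true pose $X$ and the dynamically extended pose ${}^vX$ obey the \emph{same} right-multiplicative kinematics $\dot M = M\bfu^\wedge$ driven by the common, measured velocity input $\bfu$. The natural object to study is the ``error'' $Z(t):={}^vX(t)X(t)^{-1}$, and the whole claim reduces to showing that $Z$ is constant, after which $Z(0)$ is identified as ${}^cX$.

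First I would settle forward completeness. Since $\bfv$ and $\Omegab$ are assumed bounded for $t\ge 0$, the matrix $\bfu^\wedge$ in \eqref{U} is a bounded, continuous function of time, so \eqref{dyn_ext1} is a linear time-varying ODE in the entries of ${}^vX$ with bounded coefficients; hence its solution exists and is unique on $[0,\infty)$, giving forward completeness. Moreover, because $\bfu^\wedge(t)\in\mathfrak{se}(3)$ for all $t$ and ${}^vX(0)\in SE(3)$, the flow keeps ${}^vX(t)\in SE(3)$, so the decomposition \eqref{Xv} is well posed for all $t\ge 0$.

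Next I would show $\dot Z=0$. Differentiating $XX^{-1}=I_4$ gives $\frac{d}{dt}X^{-1}=-X^{-1}\dot X X^{-1}=-\bfu^\wedge X^{-1}$, where I used $\dot X=X\bfu^\wedge$ from \eqref{dot_X}. Then, by the product rule,
$$
\dot Z = {}^v\dot X\, X^{-1} + {}^vX\,\frac{d}{dt}X^{-1} = {}^vX\bfu^\wedge X^{-1} - {}^vX\bfu^\wedge X^{-1} = 0 .
$$
Thus $Z(t)\equiv Z(0)=:{}^cX$ is constant on the connected interval $[0,\infty)$, and right-multiplying by $X(t)$ yields exactly the identity \eqref{id:1}.

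It then remains to compute the explicit decomposition of ${}^cX={}^vX(0)X(0)^{-1}$. Using the $SE(3)$ inverse $\calt(R,\bfx)^{-1}=\calt(R^{-1},-R^{-1}\bfx)$ and the composition rule $\calt(Q,\xib)\calt(R',\bfx')=\calt(QR',Q\bfx'+\xib)$, direct substitution of $X(0)=\calt(R(0),\bfx(0))$ and ${}^vX(0)=\calt(Q(0),\xib(0))$ gives ${}^cQ=Q(0)R(0)^{-1}$ and ${}^c\xib=\xib(0)-{}^cQ\,\bfx(0)$, matching the stated values. The computation is entirely routine; the only real substance is recognizing that the shared right-invariant structure of \eqref{dot_X} and \eqref{dyn_ext1} is what forces the cross terms to cancel, so the ``obstacle'', such as it is, is purely the notational bookkeeping of the group inverse and product formulas rather than any analytical difficulty.
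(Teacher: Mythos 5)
Your proposal is correct and follows essentially the same route as the paper: both define the error $Z={}^vX X^{-1}$ (the paper writes $E={}^vX X^\top$ but computes with $X^{-1}$), show $\dot Z=0$ by the product rule, identify ${}^cX=Z(0)$, and read off ${}^cQ,{}^c\xib$ from the $SE(3)$ composition formulas. The only cosmetic difference is that you establish forward completeness directly from the bounded-coefficient LTV structure of \eqref{dyn_ext1}, while the paper infers it from the forward completeness of \eqref{kinematics} together with the identity \eqref{id:1}; both arguments are valid.
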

\begin{proof}
Define an error variable 
$
E(X,{}^vX):={}^vX  X^\top,
$
which lives on $SE(3)$. Its time derivative is given by
$$
\dot E = {}^vX \bfu^\wedge X^{-1} -{}^v X X^{-1}\dot X X^{-1} = 0.
$$
%$$
%\begin{aligned}
%	\dot{\aoverbrace[L1R]{E(X, {}^v X)}} & = {}^v X U^\wedge X^{-1} - {}^v X X^{-1}\dot X X^{-1} = 0.
%\end{aligned}
%$$
Thus, there exists a constant matrix ${}^c X \in SE(3)$ satisfying 
$$
E(X(t),{}^v X(t))={}^c X, \quad \forall t\ge 0,
$$
which may be written equivalently as \eqref{id:1}.\footnote{This relationship is called in \cite{LAGetal} that the systems \eqref{dot_X} and \eqref{dyn_ext1} are $E$-synchronous.} By re-arranging terms in ${}^v X(0)={}^c X X(0)$, it is straightforward to get 
$$
Q_c :=Q(0)R(0)^{-1}
, \quad
{}^c\xib := \xib(0) - {}^cQ x(0).
$$

From the forward completeness assumption of the dynamics \eqref{kinematics}, as well as invoking the identity \eqref{id:1}, we conclude that the dynamic extension \eqref{dyn_ext1} is also forward complete.
\end{proof}

\vspace{0.2cm}

The above lemma shows that the open-loop dynamic extension \eqref{dyn_ext1} and the kinematics \eqref{kinematics} admit an affine relationship \eqref{id:1}---more precisely, there is a constant rigid transformation ${}^v\mathbf{T}_I$ between the frames $\{\cali\}$ and $\{\calv\}$---by means of which we reformulate the state estimation of $X(t)$ into the problem of online \emph{constant} parameter identification of ${}^cX \in SE(3)$.

In the following, we show the unknown constant matrix ${}^cX=\calt({}^cQ,{}^c\xib)$ verifies a nonlinear regression model.

\begin{lemma}
\label{lem:2}\rm
The functions of unknown variables $({}^cX,\vbf{L})$, defined by
\begin{equation}
\label{const:ldmk}
{}^v \ellb_i := {}^c\xib + {}^cQ\ellb_i,
\end{equation}
are \emph{constant}, and verify the algebraic equation
\begin{equation}
\label{bearing:virtual}
 \bfy_i = Q^\top {{}^v \ellb_i - \xib \over |{}^v \ellb_i - \xib|},
\end{equation}
with $Q,\xib$ given in \eqref{Xv}. 
\end{lemma}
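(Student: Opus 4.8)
The plan is to read off the block structure of the affine identity \eqref{id:1} and then reduce the claim to the orthogonality of rotation matrices. The constancy of ${}^v\ellb_i$ is immediate: Lemma \ref{lem:1} provides that ${}^cQ$ and ${}^c\xib$ are constant, and $\dot\ellb_i = 0$ by \eqref{dyn:ldmk}, so ${}^v\ellb_i = {}^c\xib + {}^cQ\ellb_i$ is a composition of time-invariant quantities.

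For the algebraic equation, I would first unpack \eqref{id:1}. Substituting the partitioned form \eqref{calt} for each factor and comparing blocks in ${}^vX = {}^cX X$ yields the two working relations
$$
Q = {}^cQ\, R, \qquad \xib = {}^cQ\, \bfx + {}^c\xib .
$$
Using the translational relation to eliminate $\xib$ in the numerator of \eqref{bearing:virtual}, the definition \eqref{const:ldmk} gives the direct cancellation
$$
{}^v\ellb_i - \xib = {}^c\xib + {}^cQ\ellb_i - \big({}^cQ\,\bfx + {}^c\xib\big) = {}^cQ(\ellb_i - \bfx),
$$
and since ${}^cQ \in SO(3)$ is norm-preserving, $|{}^v\ellb_i - \xib| = |\ellb_i - \bfx|$.

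The proof then closes by substituting these two facts into the right-hand side of \eqref{bearing:virtual} and invoking the rotational relation $Q = {}^cQ R$, from which $Q^\top {}^cQ = R^\top ({}^cQ)^\top {}^cQ = R^\top$; hence
$$
Q^\top \frac{{}^v\ellb_i - \xib}{|{}^v\ellb_i - \xib|} = R^\top\frac{\ellb_i - \bfx}{|\ellb_i - \bfx|} = \bfy_i,
$$
where the last equality is exactly the output definition \eqref{output}. I do not anticipate any genuine obstacle: the entire argument is elementary bookkeeping with $4\times 4$ block multiplications together with the identity $({}^cQ)^{-1} = ({}^cQ)^\top$. The only point requiring care is consistency of conventions---reading off the blocks of ${}^cX X$ in the correct order and tracking which frame each quantity lives in---but no nontrivial estimate or construction is involved.
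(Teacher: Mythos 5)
Your proposal is correct and follows essentially the same route as the paper's proof: both read off the block relations $Q = {}^cQ R$ and $\xib = {}^cQ\bfx + {}^c\xib$ from \eqref{id:1}, obtain ${}^v\ellb_i - \xib = {}^cQ(\ellb_i-\bfx)$ (the paper writes this as $QR^\top(\ellb_i-\bfx)$, which is the same thing), and conclude by norm-invariance under rotation together with orthogonality of ${}^cQ$ and $Q$. Your explicit one-line justification of the constancy claim via Lemma \ref{lem:1} and \eqref{dyn:ldmk} is a small addition the paper leaves implicit, but the substance is identical.
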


%%%%%

\begin{proof}
According to \eqref{id:1} and \eqref{const:ldmk}, we have
$$
\begin{aligned}
{}^v \ellb_i & ~ = ~ {}^c\xib_c + {}^c Q \ellb_i
\\
& ~ = ~ [{}^c\xib + {}^cQ\bfx] + [{^c}Q R] R^\top [\ellb_i -\bfx]  
%\\
%& = \xi_c + Q_c x + Q_c[\li - x]\\
%&
\\
& ~ = ~ \xib + QR^\top (\ellb_i - \bfx),
\end{aligned}
$$ 
and thus, from some basic geometric relation, ${}^v\ellb_i$ can be viewed as the feature point coordinates in the dynamic extension frame $\{\calv\}$. Correspondingly, the bearings in $\{\calv\}$ can be defined as
$$
 {}^v\bfy_i := Q^\top {{}^v \ellb_i - \xib \over |{}^v \ellb_i- \xib|}.
$$
Some simple calculations show that
$$
\begin{aligned} 
	{}^v\bfy_i & ~ = ~ Q^\top{QR^\top(\ellb_i - \bfx) \over |QR^\top(\ellb_i - \bfx)|}
	\\
	& ~  =  ~ R^\top {\ellb_i -\bfx \over |\ellb_i - \bfx|} 
	\\
	& ~ = ~ \bfy_i,
\end{aligned}
$$
thus yielding ${}^v\bfy_i \equiv \bfy_i$.\footnote{An intuitive interpretation to ${}^v\bfy_i \equiv \bfy_i$ is that the transformation \eqref{id:1} of the ambient from $\{\cali\}$ to $\{\calv\}$ does not change the \emph{relative} transformation from a robot to a landmark.} Hence, we have verified \eqref{bearing:virtual}.
\end{proof}

\vspace{0.2cm}

\begin{remark}
\rm 
Eqs. \eqref{bearing:virtual}-\eqref{const:ldmk} provides a nonlinear regressor w.r.t. the unknown constant matrix ${}^cX$, with $\bfy_i$, $Q$ and $\xib$ \emph{available}. Despite being quite simple, the above lemmata show estimation of time-varying states may be translated into online parameter estimation of ${}^c X$ and ${}^v\ellb_i$ successively. To the best of the authors' knowledge, such a fact has not been explored in SLAM filters design, which, however, provides possibility to relax significantly the sufficient excitation conditions in existing approaches, and to get global convergence to overcome inconsistency and inaccuracy from bad initial guesses. 
\end{remark}

In terms of the above lemmata, we reformulate SLAM into the following consistent online estimation problem.

\ \\
{\bf Problem 2} ({\em PEBO-SLAM}) Consider the dynamic extension \eqref{dyn_ext1}, with available signals ${}^vX, \bfu$ and $\bf Y$. Design a globally convergent online identifier for ${}^v\ellb_i$ and ${}^cX$ recursively. \qed

If the above problem is solvable with the estimates ${}^v\hat \ellb_i$ and ${}^c \hat X$, then 
$$
\hat X = {}^c \hat X^\top {}^vX,
\qquad
\begmat{\hat \ellb_i \\ 1}={}^c \hat X^\top \begmat{{}^v\hat \ellb_i \\ 1}
$$
provide a feasible solution to {\bf Problem 1}. It is well known that the online estimation of constant parameters is much easier than the one of time-varying states. This is one of the motivations of the paper. We call the parameter identifier (to be designed) with the dynamic extension \eqref{dyn_ext1} as PEBO-SLAM.

 Throughout the paper, we need the following assumptions.
\begin{assumption}
\label{ass:ie}\rm 
The robot trajectory $X$ guarantees that 1) $\Pi_{Q\bfy_i}$ is IE for all the landmarks to be mapped; and 2) the origin of $\{\calb\}$ never coincides with any feature points.
\end{assumption}

%%%%%%%%%%%%%%%%%%%

Let us define the vectors 
$$
\bfr_i := \ellb_{i+1} - \ellb_{i}, ~ \forall i \in \caln\backslash\{n\},
$$
and see Fig. \ref{fig:coordinates} for illustration. 

\begin{assumption}
\label{ass:3ldmk}\rm
There exists $n_\ell \ge 3$ such that  
$
\bfr_i \times \bfr_j \neq 0,~ \mbox{for~} i\ne j, ~ i,j\in \{1, \ldots,n_\ell\}.
$
\end{assumption}

\begin{figure*}[!htp]
    \centering
    %  \tcbox[sharp corners, boxsep=0mm, boxrule=0.5mm, 
    %        colframe=green!5!black, colback=white]
    { \includegraphics[width=0.6\textwidth]{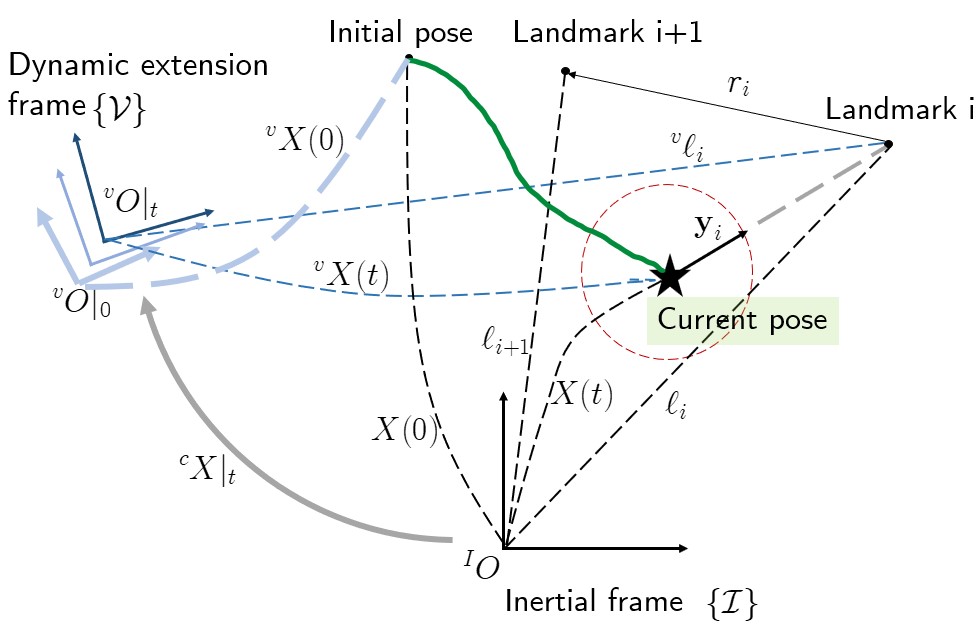} }
    \caption{Illustration of the frames in the proposed design (The dynamic extension frame is a fixed one with ideal measurements; in the above figure, it is a slowly moving frame due to the drifts stemmed from odometry errors.)}
    \label{fig:coordinates}
\end{figure*}

\subsection{A Linear Least Squares Solution}
\label{sec:32}

In this subsection, we show that the nonlinear regressors \eqref{bearing:virtual} can be equivalently written as linear regressors w.r.t. ${}^v \ellb_i$.

When the $i$-th feature point is visible in the camera view, invoking \eqref{bearing:virtual} yields for $\forall t\ge 0$ 
$$
Q \bfy_i   = {{}^v \ellb_i - \xib \over |{}^v \ellb_i - \xib |},
$$
and after straightforward calculations we obtain
$$
Q \bfy_i [Q \bfy_i]^\top ({}^v \ellb_i - \xib)
  = {}^v \ellb_i - \xib.
$$
Noting that $\bfy$ is a unit vector (i.e. $|\bfy|=1$), we thus obtain
$$
\Pi_{Q \bfy_i} \cdot [{}^v \ellb_i - \xib] =0, 
$$
which can be expressed as the linear regression models
\begin{equation}
\label{lre1}
\bfq_i(t) = \phi_i^\top(t) \cdot {}^v \ellb_i,
\end{equation}
with the measurable regression matrix $\Pi_{Q(t)\bfy_i(t)}$ and the regressor outputs $q_i$, which are defined as $\phi_i  := \Pi_{Q\bfy_i}^\top$ and $\bfq_i  := \Pi_{Q\bfy_i} \cdot \xib$, respectively. 

If the feature point is unavailable in the view during a bounded interval $\cali \subset [0,\infty)$, we may define $\phi_i:= 0_{3\times 3}$ and $\bfq_i=0_3$. Then, it is easy to verify that the linear regressors \eqref{lre1} still hold true, though not providing any information.

Hence, we unify all landmarks in the linear regression models \eqref{lre1} with the signals
\begequ
\label{phiiqi}
\begin{aligned}
\phi_i & ~ :=~ \left\{ 
\begin{aligned}
& \Pi_{Q(t)\bfy_i(t)}^\top & ~~~~~~ &\mbox{visible}
\\
& 0_{3\times 3} & &\mbox{invisible}
\end{aligned}
\right.
\\
\bfq_i & ~:=~ \left\{ 
\begin{aligned}
&\Pi_{Q(t)\bfy_i(t)}  \xib(t) &~ &\mbox{visible}
\\
& 0_3& & \mbox{invisible} 
\end{aligned}
\right.
\end{aligned}
\endequ

For the $i$-th landmark, the estimation of ${}^v\ellb_i$ can be solved via the linear least squares problem
\begequ
\label{LLS1}
 {}^v\hat\ellb_i ~=~ \underset{{}^v\ellb_i \in \rea^3}{\arg\min} \sum_k \Big|  \Pi_{Q(k)\bfy_i(k)}\cdot (\xib(k)-{}^v\ellb_i) \Big|^2
\endequ
for a few moments in the visible interval $k \in [t_a,t_b]$.

Note the above least squares problem is solvable in at least two steps, since the non full rankness of $\Pi_{Q\bfy_i}$ at any single moment. Indeed, its rank equals to two all the time when being visible. Note that the above least squares problem is presented in the deterministic sense. When considering the stochastic noise with Gaussian distribution of the bearing output \eqref{output}, the distribution cannot be preserved in the linear regression model \eqref{lre1} after nonlinear transformation.

\begin{remark}
In the above analysis, we use the projector to obtain the linear regression models \eqref{lre1} from bearings in the dynamic extension frame. This technical trick was widely used in the literature, e.g., bearing-only formation, range estimation \cite{HAMSAM}, and navigation \cite{WANetal2021}. An alternative to generate a linear regressor may be found in \cite{LOUetal} by augmenting both the position ${}^v \ellb_i$ and its range $| {}^v \ellb_i |$ in the systems state. Recently, a new linear parameterization to the feature range in the body-fixed frame $\{\calb\}$ is proposed in \cite{YIetalAUT} using linear stable filters.
\end{remark}

\begin{remark}\rm 
\label{remark:proc}
A sketch of key steps in the proposed PEBO-SLAM framework is given as follows, based on which we present a feasible observer solution in the next section.
\begin{itemize} 
\item[\bf S1] Implement the dynamic extension
$$
{}^v\dot X = {}^v X \bfu^\wedge, \quad \forall t\ge 0,
$$
with $\bfu$ defined in \eqref{U} the velocities from IMUs, and the pose in the frame of dynamic extension ${}^v X = \calt(Q,\xib)$.
%$$
%{}^v X = \begmat{Q & \xib \\ 0 & 1} \in SE(3).
%$$

\item[\bf S2] For each visible landmark, solve the least square problem in \eqref{LLS1} for a few moments in the visible interval $k \in [t_a,t_b]$.

\item[\bf S3] Fuse information of ${}^v X$ and $ {}^v \hat\ellb_i$ in order to get robust estimates $\hat X$ and $\hat \ellb_i$ in the inertial frame, in particular for the case with imperfect odometry.
\end{itemize}

\end{remark}
%
%%%%%%%%
\section{A Feasible Observer Design}
\label{sec4}
%%%%%%%%
%

After formulating VI-SLAM as a linear least squares problem, there are many approaches to solve it. In practice, measurement noise should be taken into account, and thus the dynamic extension \eqref{dyn_ext1} may have an accumulation of errors, thus making ${}^cX$ a \emph{slowly time-varying parameter} rather than a real constant. To alleviate such a robustness issue, we focus on designing consistent on-line observers.

In this section, we propose a feasible observer design to solve {\bf Problem 2}. The scheme consists of two cascaded observers, i.e., a mapping observer in $\{\calv\}$ in Theorem \ref{prop:mapping}---which estimates the landmarks ${}^v \ellb_i$ in the dynamic extension frame---and a localization and mapping observer, summarized in Theorem \ref{prop:pose-observer}, to express the estimates of pose and landmarks in the inertial frame $\{\cali\}$. See Fig. \ref{fig:schematic} for the overall schematic block diagram of the proposed design.

\begin{figure}[h]
    \centering
    \includegraphics[width=0.5\textwidth]{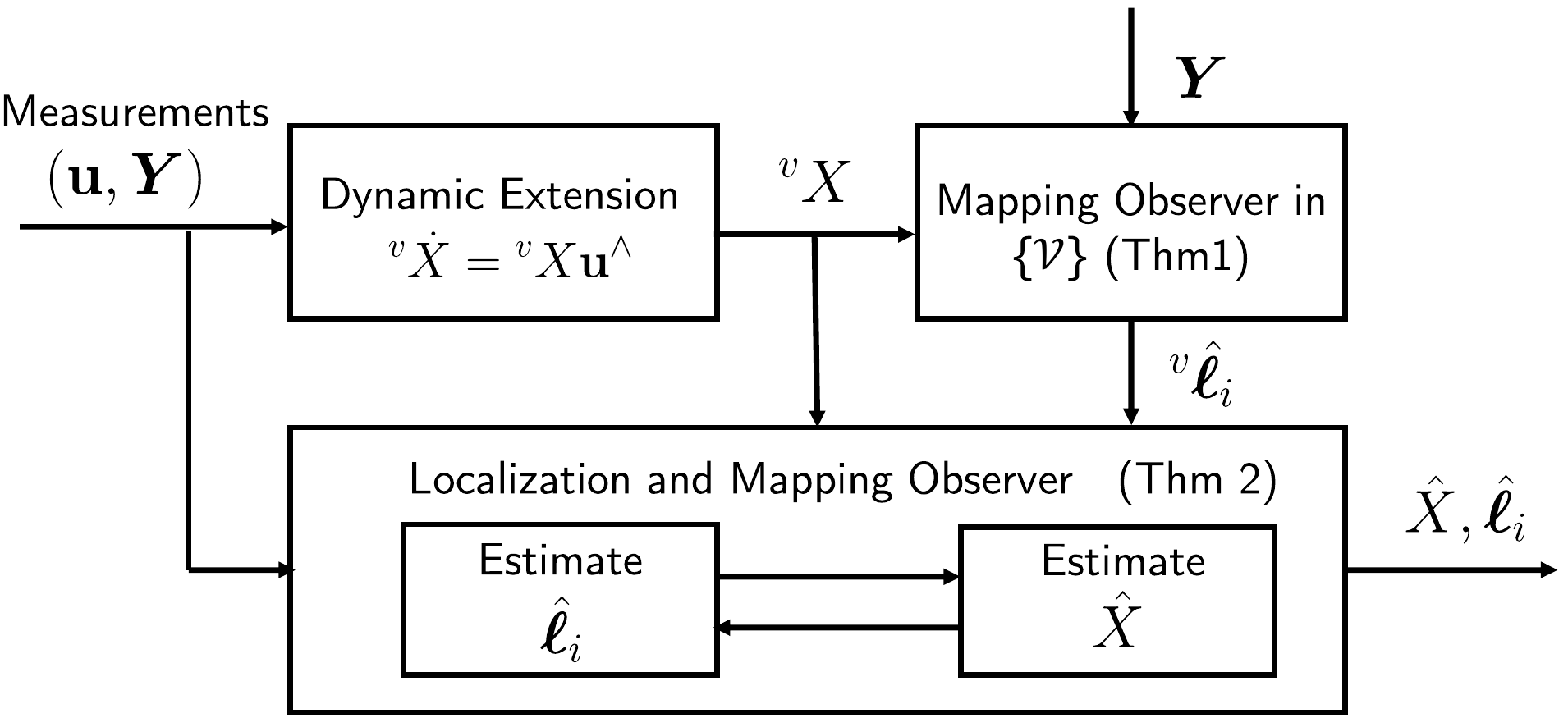}
    \caption{Block diagram of the proposed visual inertial SLAM observer}
    \label{fig:schematic}
\end{figure}

\subsection{Mapping Observer in the Dynamic Extension Frame}
\label{sec:41}

In this subsection, we consider on-line observer design to estimate ${}^v\ellb_i$, which corresponds to the step {\bf S2} in Remark \ref{remark:proc}.

It is well known that standard gradient descent or least square algorithms require some PE conditions to guarantee consistent convergence \cite{SASBOD}. Intuitively, it requires that landmarks appear in camera view persistently, and the robot keeps moving w.r.t. landmarks, which is rarely satisfied in practice. In the sequel, we show that in terms of the LREs \eqref{lre1} the restrictive excitation conditions can be relaxed significantly. The main technical route is using the DREM technique \cite{ARAetaltac,ORTetalnew} to obtain decoupled regressors, and implicitly combining with ``integral type'' historical information. We have the following.

\vspace{0.2cm}

{\em Mapping observer in the dynamic extension frame:} 
\begequ
\label{operator}
\dot\eta_\Sigma= F_\Sigma (\eta_\Sigma, \bfq_i, \phi_i),
\quad
{}^v \hat\ellb_i = N_\Sigma(\eta_\Sigma),
\endequ
in which the exogenous signals $\phi_i, \bfq_i$ are introduced in \eqref{phiiqi} with the observer internal state $\eta_\Sigma: = ({\bfq_i^3, \Phi_i, \chib_i, \omega_i, {}^v \hat \ellb_i})$, and the functions $F_\Sigma$ and $N_\Sigma$ are given by the state-space model
\begin{equation}
\label{ldmk-obs}
\Sigma:
\left\{~
\begin{aligned}
{\dot \bfq}_i^e & ~=~- \alpha_i \bfq_i^e + \alpha_i \phi_i \bfq_i, ~\bfq_i^e(0)=0
	%\Pi_{Q\bfy_i} 
	\\
	\dot{\Phi}_i & ~=~ - \alpha_i\Phi_i+ \alpha_i \phi_i, ~ \Phi_i(0) = 0  %\Pi_{Q\bfy_i}
	\\
	\dot{\chib}_i & ~=~  \Delta_i ( \vbf{Y}_i - \Delta_i \chib_i )
	\\
	\dot{\omega}_i & ~=~ - \Delta_i^2\omega_i, ~\omega_i(0)=1
	\\
	{}^v \dot{\hat \ellb}_i & ~=~  \gamma_i \Delta_i^e  ( \vbf{Y}_i^e - %\Big(\Delta_i(t) +\ki^i(1-\omega(t) \Big)
	\Delta_i^e \cdot 	{}^v\hat{\ellb}_i)
\end{aligned}
\right.
\end{equation}
with 
\begin{itemize}
    \item[-] the functions:  
\begequ
\label{delta_ie}
\begin{aligned}
\Delta_i  & ~:=~ \det\{\Phi_i\}
\\
\vbf{Y}_{i} & ~:= ~\adj\{\Phi_i\} \bfq_i^e
\\
\Delta_i^e & ~:=~ \Delta_i + \ki^i(1-\omega_i)
\\
{\vbf Y}_i^e & ~:=~ \vbf{Y}_i + \ki^i\big[\chib_i - \omega_i\chib_{i}(0) \big].
\end{aligned}
\endequ
%%%%%
\item[-] the gains: 
$
\alpha_i>0, ~\gamma_i >0, ~\ki^i>0.
$
\end{itemize}
\qed

The above observer enjoys the following provable convergence guarantees.

\begin{theorem} \rm  
\label{prop:mapping} ({\em Mapping in the dynamic extension frame})
The mapping observer \eqref{ldmk-obs}-\eqref{delta_ie} guarantees
\begin{itemize}
    \item[1)] ({\em Internal stability}) All the internal states are bounded.
    \item[2)] ({\em Element-wise monotonicity}) For all $ t_a \ge t_b \ge 0$,
    $$
    |{}^v\hat \ell_{i,j} (t_a) - {}^v\ell_{i,j} | \le |{}^v\hat \ell_{i,j}(t_b) - {}^v \ell_{i,j}|.
    $$
    \item[3)] ({\em GES under IE}) Assuming that $\Pi_{Q\bfy_i}$ is $(t_0,t_c,\delta)$-IE, we have
    \begequ
    \label{convergence:tilde_ell}
    \lim_{t\to\infty} | {}^v \hat \ellb_i (t) - {}^v\ellb_i| =0 \quad \mbox{(exp.)}
    \endequ
    globally. 
    \item[4)] ({\em Convergence rate}) Under the above IE condition, the exponential convergence rate is not less than
\begequ
\gamma_\star :=  \gamma_i (k_{\tt I}^i)^2  \left( 1 - \exp \left(- {\delta_0^2 \over 6\alpha_i e} \right) \right)^2
\label{gamma_star},
\endequ
after some moment with $\delta_0:=(\alpha_i \delta e^{-\alpha_i t_c})^3$, i.e.
$$
\begin{aligned}
\left|{}^v \tilde{\ell}_{i,j}(t) \right| \le  e^{-\gamma_\star (t-\tau_\star)} 
\left|
{}^v \tilde{\ell}_{i,j} 
\left(
t_0+t_c+{1\over 6\alpha_i}
\right)
\right|
,\qquad\qquad \\ t\ge t_0+t_c+ {1\over 6\alpha_i}.
\end{aligned}
$$
\end{itemize}
\end{theorem}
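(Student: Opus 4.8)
The plan is to show that the DREM-type mixing in $\Sigma$ reduces the coupled regression \eqref{lre1} to three \emph{exact} decoupled scalar linear error equations, and then to convert the one-shot IE burst into a permanent strictly positive lower bound on the extended regressor $\Delta_i^e$, from which all four claims follow. First I would establish the identities carried through the filters. Since $\phi_i=\Pi_{Q\bfy_i}$ is a symmetric idempotent projector, $\phi_i\bfq_i=\Pi_{Q\bfy_i}{}^v\ellb_i=\bfq_i$; because the stable LTI filter in the first two lines of \eqref{ldmk-obs} commutes with right-multiplication by the constant ${}^v\ellb_i$ and both states start at zero, this gives $\bfq_i^e(t)=\Phi_i(t)\,{}^v\ellb_i$ for all $t$. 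The adjugate identity $\adj\{\Phi_i\}\Phi_i=\det\{\Phi_i\}I_3$ then yields $\vbf{Y}_i=\Delta_i\,{}^v\ellb_i$ exactly. For the extension I would note that $\tilde{\chib}_i:=\chib_i-{}^v\ellb_i$ satisfies $\dot{\tilde{\chib}}_i=-\Delta_i^2\tilde{\chib}_i$, i.e. the same scalar LTV equation as $\omega_i$, so $\chib_i-{}^v\ellb_i=\omega_i(\chib_i(0)-{}^v\ellb_i)$; substituting into \eqref{delta_ie} gives $\vbf{Y}_i^e=\Delta_i^e\,{}^v\ellb_i$ exactly. Hence the last line of \eqref{ldmk-obs} produces the decoupled error dynamics
$$
{}^v\dot{\tilde\ell}_{i,j}=-\gamma_i(\Delta_i^e)^2\,{}^v\tilde\ell_{i,j},\qquad {}^v\tilde\ell_{i,j}:={}^v\hat\ell_{i,j}-{}^v\ell_{i,j},
$$
whose solution is ${}^v\tilde\ell_{i,j}(t)=\exp(-\gamma_i\int_0^t(\Delta_i^e)^2\,ds)\,{}^v\tilde\ell_{i,j}(0)$.

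Claims 1) and 2) then follow quickly. Monotonicity is read off $\frac{d}{dt}|{}^v\tilde\ell_{i,j}|^2=-2\gamma_i(\Delta_i^e)^2|{}^v\tilde\ell_{i,j}|^2\le 0$. For internal stability I would use $\|\Pi_{Q\bfy_i}\|\le 1$ together with $\bfq_i=\Pi_{Q\bfy_i}{}^v\ellb_i$, which is bounded by $|{}^v\ellb_i|$, so the stable filters keep $\Phi_i,\bfq_i^e$ bounded, whence $\Delta_i,\vbf{Y}_i$ are bounded; $\omega_i\in(0,1]$ from $\dot\omega_i=-\Delta_i^2\omega_i$; $\chib_i=\omega_i(\chib_i(0)-{}^v\ellb_i)+{}^v\ellb_i$ is bounded; and ${}^v\hat\ellb_i$ is bounded by 2).

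The crux, and the step I expect to be the main obstacle, is claims 3) and 4): turning a single IE burst into a lower bound on $\Delta_i^e$ that never vanishes. I would use that $\Phi_i(t)=\int_0^t\alpha_i e^{-\alpha_i(t-s)}\Pi_{Q\bfy_i}(s)\,ds\succeq 0$ as an integral of PSD matrices, and that $\Pi\Pi^\top=\Pi$ turns the IE hypothesis into $\int_{t_0}^{t_0+t_c}\Pi_{Q\bfy_i}\,ds\succeq\delta I_3$. Lower-bounding the exponential weight on $[t_0,t_0+t_c]$ gives $\Phi_i(t_0+t_c)\succeq\alpha_i\delta e^{-\alpha_i t_c}I_3$, hence $\Delta_i(t_0+t_c)\ge(\alpha_i\delta e^{-\alpha_i t_c})^3=\delta_0$. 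Because the forcing $\alpha_i\Pi_{Q\bfy_i}$ is PSD, variation of constants gives the persistence bound $\Phi_i(t)\succeq\alpha_i\delta e^{-\alpha_i t_c}e^{-\alpha_i(t-t_0-t_c)}I_3$ for all $t\ge t_0+t_c$ irrespective of later motion, so $\Delta_i(t)\ge\delta_0 e^{-3\alpha_i(t-t_0-t_c)}$. Integrating $\Delta_i^2$ over $[t_0+t_c,\tau_\star]$ with $\tau_\star:=t_0+t_c+\frac{1}{6\alpha_i}$ and bounding the integrand below by its right-endpoint value gives $\int_0^{\tau_\star}\Delta_i^2\,ds\ge\frac{\delta_0^2}{6\alpha_i e}$; since $\omega_i$ is nonincreasing, this yields the lasting bound $1-\omega_i(t)\ge 1-\exp(-\frac{\delta_0^2}{6\alpha_i e})$ for all $t\ge\tau_\star$. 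This is precisely the mechanism by which the $\omega_i$-memory upgrades non-uniform IE into durable excitation of the final estimator.

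Finally I would assemble the rate. Using $\Delta_i\ge 0$, the extended regressor obeys $\Delta_i^e\ge\ki^i(1-\omega_i)\ge\ki^i(1-\exp(-\frac{\delta_0^2}{6\alpha_i e}))$ for $t\ge\tau_\star$, so $\int_{\tau_\star}^t(\Delta_i^e)^2\,ds\ge(\ki^i)^2(1-\exp(-\frac{\delta_0^2}{6\alpha_i e}))^2(t-\tau_\star)$. Substituting into the explicit solution gives $|{}^v\tilde\ell_{i,j}(t)|\le e^{-\gamma_\star(t-\tau_\star)}|{}^v\tilde\ell_{i,j}(\tau_\star)|$ with $\gamma_\star$ as in \eqref{gamma_star}, proving 4); and since the error map is linear with $\int_0^\infty(\Delta_i^e)^2\,ds=\infty$, the convergence in 3) holds globally for every initial condition.
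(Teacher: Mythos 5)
Your proposal is correct and follows essentially the same route as the paper's proof: zero-initialized Kreisselmeier filters plus the adjugate (DREM) mixing give the exact scalar regressor $\vbf{Y}_i^e=\Delta_i^e\,{}^v\ellb_i$, the decoupled error dynamics ${}^v\dot{\tilde{\ellb}}_i=-\gamma_i(\Delta_i^e)^2\,{}^v\tilde{\ellb}_i$ deliver claims 1)--2), and the $\omega_i$ memory converts the single IE burst into the lasting bound $\Delta_i^e\ge \ki^i\big(1-\exp(-\delta_0^2/(6\alpha_i e))\big)$ after $t_0+t_c+1/(6\alpha_i)$, reproducing the paper's $\delta_0$ and $\gamma_\star$ exactly. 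The only deviations are cosmetic and harmless: you establish internal stability by direct BIBO bounds where the paper uses a Lyapunov function, you obtain the post-window decay of $\Delta_i$ via variation of constants with PSD forcing rather than re-applying the IE definition on an extended window, and you derive the quantitative rate first and deduce PE/GES as a corollary, whereas the paper first verifies PE of $\Delta_i^e$ qualitatively (continuity plus Cauchy--Schwarz) and then optimizes $\tau\mapsto 1-\exp(-\delta_0^2\tau e^{-6\alpha_i\tau})$ to land on the same $\tau_\star=1/(6\alpha_i)$ you pick directly.
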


\vspace{0.1cm}

\begin{proof}
The first step is to generate a \emph{square} regression matrix, which may be full rank at some moment, by introducing linear bounded-input-bounded-output (BIBO) operators.% 
\footnote{
To be precise, though in our case the regression matrix $\phi_i$ defined in \eqref{phiiqi} is square, it is not full rank, i.e.
$
\det\{\Pi_{Q(t)\bfy_i(t)}\} = 0,\; \forall t\ge 0.
$
Therefore, it implies the necessity of additional dynamic extensions to continue the design. 
}
%For each $i\in \caln$, the BIBO filter in the first two equations of \eqref{ldmk-obs} has the system states $(\bfq_i^e,\Phi_i)\in \rea^3 \times \rea^{3\times 3}_{\succeq 0}$. 
It is easy to get
$$
\begin{aligned}
\dot{\aoverbrace[L1R]{\bfq_i^e - \Phi_i{}^v \ellb_i}}
& ~=~
-\alpha_i (\bfq_i^e - \Phi_i{}^v \ellb_i) +\alpha_i( \phi_i \bfq_i - \phi_i {}^v\ellb_i )
\\
& ~ = ~
-\alpha_i (\bfq_i^e - \Phi_i{}^v \ellb_i) +\alpha_i( \phi_i \phi_i^\top {}^v\ellb_i  - \phi_i {}^v\ellb_i )
\\
& ~=~
-\alpha_i \big( \bfq_i^e - \Phi_i {}^v \ellb_i\big),
\end{aligned}
$$
where in the second equation we have used \eqref{lre1}, and the last we have used the facts 
$$
\phi_i^\top = \phi_i, \quad \phi_i\phi_i^\top = \phi_i
$$ in terms of the definition $\bfq_i  := \Pi_{Q\bfy_i} \cdot \xib$. Hence, it yields a group of new extended LREs%
\begin{equation}
\label{k-elre2}
 \bfq_i^e = \Phi_i {}^v \ellb_i + \et,
\end{equation}
which is known as the Kreisselmeier's extended LRE \cite{KRE}, recalling that $\et$ is an exponentially decaying term. Then, we mix the regressors \eqref{k-elre2} to get three decoupled, \emph{scalar} regressors for each $i \in \caln$, that is pre-multipying the adjugate matrix $\adj\{\Phi_i\}$ to the both sides, thus obtaining
\begin{equation}
\label{decp-regr}
   Y_{i,j} = \Delta_i {}^v\ell_{i,j} + \et,  \quad  j\in\{1,2,3\}
\end{equation}
with the definitions
\begin{equation}
\label{yij}
\begin{aligned}
\vbf{Y}_{i}   := \adj\{\Phi_i\} \bfq_i^e, \quad
\Delta_i  := \det\{\Phi_i\}
\end{aligned}
\end{equation}
and 
$
\vbf{Y}_{i}:=
\col(Y_{i,1}, Y_{i,2}, Y_{i,3}).
%\begmat{Y_{i,1}(t) \\ Y_{i,2}(t) \\ Y_{i,3}(t)}%^\top.
$

The second step of the proof is to show that from \eqref{decp-regr} we may obtain  $3\times n$ new \emph{scalar} linear regressors. For convinience, we define $\chib_i(0) = \chib_{i,0}$. The time derivative of $(\chib_i - {}^v \ellb_i)$ is an LTV system, which is given by
$$
{d\over dt}[{ \chib_i -{}^v \ellb_i }]
=
- \Delta_i^2(t) [\chi_i -{}^v \ellb_i]
$$
with $\Delta_i$ a scalar bounded signal. Its solution is
$$
\begin{aligned}
\chib_i(t) -{}^v \ellb_i 
&
~=~ \exp\bigg( - \int_{0}^t\Delta_i(s)^2 ds  \bigg)  [\chib_i(0) -{}^v \ellb_i]
\\
&
~=~ \omega_i(t) (\chib_i(0) -{}^v \ellb_i) 
\end{aligned}
$$
by noting that the solution to the fourth equation in \eqref{ldmk-obs} is given by
$$
\omega_i(t) = \exp\left(- \int_{0}^t\Delta_i(s)^2 ds \right).
$$
It yields
\begin{equation}
\label{int_regr}
  \chib_i(t) - \omega_i(t) \chib_{i,0} = [1-\omega_i(t)] {}^v \ellb_i,
\end{equation}
in which we underscore that $\chib_i,\omega_i$ and $\chib_{i,0}$ ($t\ge 0$) are all available in observer design. In this way, we get the scalar LREs \eqref{int_regr} involving the ``integral'' information of $\Delta_i$. Combining \eqref{int_regr} and \eqref{decp-regr}, we obtain new linear regressors\footnote{The role of the parameters $\ki^i$ will be discussed in the sequel.}
\begin{equation}
\label{lre_mixing}
\vbf{Y}_i + \ki^i ( \chi_i - \omega_i\chib_{i,0} ) + \et
=
\Delta_i^e \cdot {}^v \ellb_i,
\end{equation}
in which ${}^v \ellb_i$ are the unknown constant parameters. Correspondingly, by defining the estimation error 
$$
{}^v \tilde \ellb_i := {}^v \hat\ellb_i -{}^v \ellb_i
$$
we have\footnote{We omit the exponentially decaying term $\et$ due to the specific initialization of $q_i^e$ and $\Phi_i$.}
\begequ
\label{dot_ziv}
{}^v\dot{\tilde \ellb}_i= - \gamma_i (\Delta_i^e)^2\cdot {}^v\tilde{\ellb}_i .
\endequ

Now we are in position to verify the first claim. Choosing the Lyapunov function 
$$
V(\chib_i,\omega_i,{}^v\tilde{\ellb}_i) = \hal\sum_{i=1}^n \left(
|{}^v\tilde{\ellb}_i|^2 + |\chib_i - {}^v \ellb_i|^2 + |\omega_i|^2
\right),
$$
it yields
$$
\begin{aligned}
\dot V & = -
\sum_{i=1}^n  \left[\gamma_i (\Delta_i^e)^2 |{}^v\tilde \ellb_i|^2 + \Delta_i^2 |\chib_i - {}^v\ellb_i|^2 + \Delta_i^2 \omega_i^2
\right]
 \le 0.
\end{aligned}
$$
Thus, the system is internally stable.

The second claim is verified invoking $\Delta_i^e$ are scalar signals. Recalling the dynamics of ${}^v \tilde \ellb_i$ in \eqref{dot_ziv}, the last claim is equivalent to verify $\Delta_i^e \in \mbox{PE}$. To simplify the presentation, we neglect the index $i$ of the IE condition from the $i$-th feature points. From the IE assumption of $\Pi_{Q\bfy_i}$, there exist $t_0,t_c,\delta \in \rea_{+}$ such that
$$
\int_{t_0}^{t_0+t_c} \Pi_{Q(s)\bfy_i(s)}\Pi_{Q(s)\bfy_i(s)}^\top ds
\ge \delta I_3.
$$
Noting the second equation in \eqref{ldmk-obs}, we can verify
$$
\begin{aligned}
 \Phi_i(t_0+t_c)  
%\\
%& =  e^{-\alpha_i t_c} \Phi_i(t_0) + \int_{t_0}^{t_0+t_c} e^{-\alpha_i(t_0+t_c-s)}\Pi_{Q(s)y_i(s)}^\top\Pi_{Q(s)y_i(s)} ds
~ \succeq  ~ &  \alpha_i\int_{t_0}^{t_0+t_c} e^{-\alpha_i(t_0+t_c-s)} \phi(s) ds
\\
~ \succeq ~ &  \alpha_i e^{-\alpha_i t_c} \int_{t_0}^{t_0+t_c}  \phi(s) ds
\\
~ = ~ & \alpha_i e^{-\alpha_i t_c} \int_{t_0}^{t_0+t_c}  \Pi_{Q(s)\bfy_i(s)} \Pi_{Q(s)\bfy_i(s)}^\top ds
\\
~ \succeq~ & \alpha_i \delta e^{-\alpha_i t_c}  I_3.
\end{aligned}
$$
It implies 
\begequ
\label{delta0}
\Delta(t_0+t_c) \ge \delta_0 := (\alpha_i \delta e^{-\alpha_i t_c})^3.
\endequ
Hence, from the continuity of differential equations, in a small neighborhood of the moment $t_0+t_c$, the signal $\Delta(t)$ is lower bounded from zero. Namely, there exist small $\tau>0$ and $\varepsilon >0$ such that 
\begequ\label{verify_PE}
\begin{aligned}\footnotesize 
 \int_{t_c'-\tau}^{t_c'} \Delta_i(s) ds  \ge \sqrt \varepsilon \quad
\\
~ \implies ~~~ & 
\int_{0}^{t} \Delta_i^2(s) ds \ge{1 \over \tau}\varepsilon, ~\forall t\ge t_c'
\\
 ~ \implies ~~~ &  1- \omega_i \ge 1-e^{ - {1\over \tau}\varepsilon}, ~\forall  t\ge t_c'
\\
~ \implies ~~~ &  \Delta_i^e= \Delta_i + \ki^i(1-\omega_i) \in \mbox{PE}
\end{aligned}
\endequ
with 
$
t_c':= t_0+t_c,
$
in which we have used the non-negativeness property $\Delta_i(t) \ge 0$, $\forall t\ge 0$ and the Cauchy-Schwarz inequality for integrals. For the LTV system \eqref{dot_ziv}, it is well known that the global exponential stability (GES) is equivalent to $\Delta_i^e$ being PE, which has been verified in \eqref{verify_PE}. Therefore, the error dynamics admits the exponential convergence \eqref{convergence:tilde_ell} under the IE condition of $\phi_i$. 

At the end, we give an estimate of the convergence speed of the proposed mapping observer. In terms of the above analysis, we have $\Delta_i(t_c) \ge \delta_0$ with the parameter $\delta_0$ defined in \eqref{delta0}. It is clear that if the matrix $\Pi_{Q{\bf y}_i}$ is $(t_{0},t_{c},\delta)$-IE, then it is also $(t_{0} ,t_{c} + \tau,\delta)$-IE for any $\tau\ge0$ according to the definition of IE. In the same way, we obtain
$$
\begin{aligned}
\Delta_i(t_0 + t_c + \tau) &~ \ge~ \left[\delta e^{-\alpha_i (t_0+t_c+ \tau)} \right]^3 
\\
& 
~=~ \delta_0 e^{-3\alpha_i \tau} 
, \quad \forall \tau\ge 0,
\end{aligned}
$$
and
$$
\Delta_i(t) \ge \delta_0 e^{-3\alpha_i \tau}, \quad \forall t\in [t_0+t_c, t_0+t_c+\tau].
$$
It is underlined that the parameter $\delta_0$ is independent of $\tau$. From the dynamics of $\omega$, it yields for $t\ge t_0+t_c+\tau$
$$
\begin{aligned}
\omega_i(t) 
& ~=~ \exp \bigg( - \int_0^t \Delta_i(s)^2 ds \bigg)
\\
& ~\le~ 
\exp \bigg( - \int_{t_0+t_c}^{t_0+t_c+\tau} \Delta_i(s)^2 ds \bigg)
\\
& ~\le~
\exp \bigg( -  \delta_0^2 \tau e^{-6\alpha_i \tau} \bigg).
\end{aligned}
$$
Then, invoking the definition of $\Delta_i^e$ we have for $t\ge t_0+t_c+\tau$
$$
\begin{aligned}
\Delta_i^e(t)  & ~ \ge ~ k_{\tt I}^i (1-\omega_i(t))
\\
 & ~ \ge ~ k_{\tt I}^i \left[ 1- \exp \bigg( -  \delta_0^2 \tau e^{-6\alpha_i \tau} \bigg) \right],
\end{aligned}
$$
where in the first inequality we have used the positive semi-definiteness of $\Phi(t)$. By investigating the stationary point of the function 
$$
f(\tau) = 1- \exp\left(-\delta_0^2 {\tau\over e^{6\alpha_i \tau}}\right),
$$
we have $\nabla f({1\over 6\alpha_i}) =0$ thus the function getting its unique minimum at $\tau_\star = {1\over 6\alpha_i}$. Hence,
$$
\Delta_i^e(t) \ge k_{\tt I}^i  \left( 1 - \exp \left(- {\delta_0^2 \over 6\alpha_i e} \right) \right), 
\quad
t \ge t_0+t_c+ \tau_\star.
$$
It is easy to get
$$
|{}^v \tilde{\ell}_{i,j}(t)| \le  e^{-\gamma_\star (t-\tau_\star)} |{}^v \tilde{\ell}_{i,j}(t_0+t_c+\tau_\star)|, \quad t\ge t_0+t_c+ \tau_\star,
$$
with $\gamma_\star$ defined in \eqref{gamma_star}. It completes the proof.
\end{proof}

\vspace{0.1cm}

In the above, we present a GES landmark observer using the extremely weak IE condition, and give an estimate of its convergence speed. After obtaining $({}^v X, {}^v\ellb_1, \ldots, {}^v\ellb_n)$, we indeed achieve the visual inertial odometry. Compared with VI-SLAM, the odometry from ${}^v X$ is highly sensitive to unavoidable measurement noise, and we will propose an observer in Section \ref{sec42} to complete the task to estimate ${}^c X$, as well as robustifying the design.

\begin{remark}\rm
Some remarks are made about Theorem \ref{prop:mapping}.

\begin{enumerate}
\item[(a)]  It is interesting to note that the proposed observer for \emph{each landmark} is decoupled, thus having computation complexity $\calo(n)$ with $n$ landmarks. This should be compared with $\calo(n^2)$ for standard EKF-SLAM algorithms \cite{TANetal}.

\item[(b)] All functions in the observer are $\calc^\infty$. This is unlike the approaches via homogeneous domain (with fractional power), where non-smooth or discontinuous terms are utilized \cite{WANetal}. %As shown in the second item in Proposition \ref{prop:2}, it achieves \emph{element-wise} convergence of $|\tilde{\ell}_{i,j}^v|$ rather than the norm convergence of $|\tilde{\ell}_i^v|$. 
\item[(c)] The success of the proposed landmark observer design relies on generation of the new scalar LREs \eqref{lre_mixing}, which satisfy the PE condition \cite{ORTetalaut}. It combines two parts, namely, the first contains information in the current small ``interval'' invoking the K-ELRE generated by the filter---the first four equations in \eqref{ldmk-obs}---and the second one consists of historical data using an integral operation. The parameters $\ki^i$ are adopted to play a role of weighting between them. 
    $$
    \begin{aligned}\small
    \underbrace{\vbf{Y}_i(t)}_{\mbox{\footnotesize``current interval''}} + ~\underbrace{ \ki^i \vbf{Y}_i^e(t)}_{\footnotesize\mbox{historical information}} 
~=~
\Delta_i^e(t)
{}^v \ellb_i.
\end{aligned}
    $$ 
%%%

% \item[(d)] The obtained new LREs \eqref{int_regr} may be replaced by a \emph{pure integral} action, {e.g.},
% 	$$
% 	\int_0^t Y_{i}(s) ds = \int_{0}^t \Delta_i(s)ds \cdot {}^v \ellb_i %+ \int_0^t \et(s) ds,
% 	$$
% 	in order to make full use of the IE condition. It, however, may cause some internal states in the observer unbounded as $t\to\infty$. This issue is absent in the proposed design.
	
%%%
\end{enumerate}
\end{remark}

\vspace{0.2cm}
%%%%%%%%%%

\begin{remark}
\label{remark:local_mapping}\rm
The odometry $^v X$ generated from \eqref{dyn_ext1} can be viewed as a \emph{fragile} estimate of the pose $X$. Namely, if an anchor pose is selected in advance, i.e., pre-defining\footnote{
%%%%%%%%%%%%%%%%%%%
It is standard practice for SLAM algorithms to give an anchor $X(0)$ in advance, since it is widely recognized that the system \eqref{kinematics}-\eqref{dyn:ldmk} and \eqref{output} is not strongly differentially observable \cite{BES}. The underlying reason is clear that it is ambiguous to identify the origin of $\{\cali\}$ with only measurements in the $\{\calb\}$, see for example \cite{ZLOFOR} in which the estimation error converges to a \emph{quotient manifold} rather than an isolated equilibrium. 
%%%%%%%%%%%%%%%%%%%%
} 
$X(0)=X_0:=\calt(R_0,\bfx_0)\in SE(3)$ in $\{\cali\}$, then for \eqref{dyn_ext1} by choosing particular initial conditions we have the invariance
$$\small
  \Big[\xib(0) = \bfx_0, ~ Q(0)= R_0\Big] 
  ~\implies ~
  \Big[ X(t)  = {}^v X(t), ~ \forall t\ge0\Big]
$$
under \emph{ideal circumstance}. This, however, is not practically applicable especially after a long period, since the odometry---the dynamic extension \eqref{dyn_ext1} relies on open-loop integral, which may be a problematic operation yielding \emph{error accumulation}.
\end{remark}

%%%%%%%%%%

\subsection{Landmark and Pose Observer in the Inertial Frame}
\label{sec42}

In this section, we introduce a localization and mapping observer, which is cascaded to the mapping observer in the dynamic extension frame in Theorem \ref{prop:mapping} in order to express all estimates in the inertial frame, and is robust \emph{vis-\`a-vis} measurement noise. That is to synchronize $\hat X$ to $X$, equivalently estimating the transformation $^c X$.

We now propose the main results of the paper---a localization and mapping observer to fuse the estimates of landmarks and pose together, which provides robust estimation $\hat \bfx, \hat R$ and $\hat \ellb_i$ in the inertial frame.

Our design is motivated by the standard EKF-SLAM \cite{HUADIS}, i.e., making the landmark estimates $\hat \ellb_i$ and the pose estimate $\hat X$ ``coupled'' together.\footnote{In EKF-SLAM, the Jacobian of the output function achieves such coupling between $X$ and $\hat \ellb_i$.} In this way, the observation performance of $\hat \ellb_i$ and $\hat X$ can be \emph{improved from each other}. To this end, we rely on the following technical route:
\begin{itemize}
    \item[-] Estimate the landmarks (denoted as $\bar \ellb_i$) in the inertial frame, using the information of measurement $\bfy,\bfu$, and ${}^v X$, as well as the anchor $(R_0, \bfx_0)$---see Eq. \eqref{bar_l};
    
    \item[-] Use the ``observation error'' between $\bar\ellb_i$ and ${}^v \ellb_i$ to estimate the transformation $^c Q$; see the first equation in \eqref{observer:pose}; 
    
    \item[-] Express all estimates in the inertial frame, with the estimation convergence $\hat X= \calt(\hat R, \hat \bfx) \to X$ and $\hat\ellb_i \to \ellb_i$ as $t\to\infty$; see Theorem \ref{prop:pose-observer}.

\end{itemize}

\vspace{0.2cm}

{\em Localization and mapping observer:} 
\begin{equation}
\label{observer:pose}
\left\{~
\begin{aligned}
{}^c\dot{\hat Q} & ~=~  - ( \bfw_{\tt vis} )_\times {}^c \hat Q \\
\dot{\hat \bfx} &~ = ~\hat R\bfv  + \sum_{i \in \caln} \sigma_i \big[\bar \ellb_i - \hat \bfx - {}^c \hat Q^\top(\hat{\ellb}_i^v - \xib) \big],
%\\
%\bar r_j & = \bar z_{j+1} - \bar z_j
%\\
%\hat R & = \hat{Q}_c^\top Q \\
%\hat X_c & = \calt({}^c \hat Q, \hat \xib_c)\\
\end{aligned}
\right.
\end{equation}
with 
\begequ
\label{bar_l}
\begin{aligned}
\dot\eta_{\Sigma_i} &= F_{\Sigma_i} \Big(\eta_\Sigma, \phi_i^\top(\xib - \xib(0)+ Q(0) R_0^\top \bfx_0),  R_0 Q(0)^\top \phi_i \Big)
\\
\bar\ellb_i & = N_\Sigma(\eta_\Sigma),
\end{aligned}
\endequ
where the functions $F_\Sigma, N_\Sigma$ defined by the system $\Sigma$ in \eqref{ldmk-obs} and
\begin{itemize}
    \item[-] the gains:
    $
    k_i, \sigma_i >0, ~ \forall i;
    $
    
    \item[-] the variables: 
    $$
    \begin{aligned}
    {}^v \hat\bfr_i &:= {}^v \hat \ellb_{i+1} - {}^v \hat \ellb_{i},
    ~~
\vbf{w}_{\tt vis}   := \sum_{i =1}^{n-1} k_i  {}^v\hat{\bfr}_i \times ( {}^c \hat Q \bar \bfr_i )
\\
\bar\bfr_i & := \bar\ellb_{i+1} - \bar\ellb_i
\end{aligned}
$$
% \bar \bfr_i   = \bar \ellb_{i+1} - \bar \ellb_i, 

    \item[-] the observer outputs:
    \begequ
\label{ldmk-obs-alg}
\begin{aligned}
&\hat R  & =& ~~{}^c\hat{Q}^\top Q
\\
&\hat \ellb_i & =& ~~ {}^c \hat Q^\top ({}^v  \hat \ellb_i - \xib) + \hat \bfx, \quad i \in \caln.
\end{aligned}
\endequ
\qed
\end{itemize}

The properties of the above observer are given as follows.

\vspace{0.2cm}

\begin{theorem}
\label{prop:pose-observer}\rm ({\em Localization and mapping})
Consider the kinematic model \eqref{kinematics} under Assumptions \ref{ass:ie}-\ref{ass:3ldmk} with the given anchor $(R_0, \bfx_0)$. The localization and mapping observer \eqref{bar_l}, \eqref{observer:pose} and \eqref{ldmk-obs-alg}, and the dynamic extension \eqref{dyn_ext1} from the initial condition $\calt(Q(0),\xi(0))$, together with the mapping observer in Theorem \ref{prop:mapping}, achieve the task \eqref{convergence} almost globally.\footnote{The qualifier ``almost'' stems from the fact that the set of initial conditions which do not guarantee the convergence has zero Lebesgue measure.} 
\end{theorem}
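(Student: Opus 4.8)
The plan is to exploit the cascade structure built into the design: two globally exponentially convergent mapping layers feed a rotation observer, which in turn feeds a linear position observer. First I would fix the gauge, following Remark \ref{remark:local_mapping}, by taking the anchor to coincide with the (otherwise unobservable) initial pose, $X(0)=\calt(R_0,\bfx_0)$, so that Lemma \ref{lem:1} delivers the constant ${}^cQ=Q(0)R_0^\top$ together with the working identities $Q={}^cQ R$, $\xib={}^cQ\bfx+{}^c\xib$, whence $R={}^cQ^\top Q$ and ${}^cQ^\top({}^v\ellb_i-\xib)=\ellb_i-\bfx$; these algebraic relations are precisely what turn every hat/bar quantity into its true counterpart at the desired equilibrium. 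For the first layer, Theorem \ref{prop:mapping} gives ${}^v\hat\ellb_i\to{}^v\ellb_i$ globally exponentially under the IE part of Assumption \ref{ass:ie}, so ${}^v\hat\bfr_i\to{}^v\bfr_i={}^cQ\bfr_i$ up to an $\et$ term. For the second layer, the operator $\Sigma$ in \eqref{bar_l} is driven so that its underlying regression encodes the reconstructed inertial-frame bearing constraint built from $R\bfy_i=R_0Q(0)^\top Q\bfy_i$ and the odometry $\bar\bfx=R_0Q(0)^\top(\xib-\xib(0))+\bfx_0$, which equals $\bfx$ under the chosen gauge. Since $R_0Q(0)^\top$ is a fixed rotation and $\Pi_{R\bfy_i}=R_0Q(0)^\top\Pi_{Q\bfy_i}Q(0)R_0^\top$, the IE of $\Pi_{Q\bfy_i}$ transfers verbatim and Theorem \ref{prop:mapping} applies again, yielding $\bar\ellb_i\to\ellb_i$ and $\bar\bfr_i\to\bfr_i$ exponentially. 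I would then treat both layer errors as exponentially decaying inputs to the remaining stages.

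For the attitude layer I would define the rotation error $\tilde R:={}^c\hat Q\,{}^cQ^\top\in SO(3)$. Because ${}^cQ$ is constant, \eqref{observer:pose} gives $\dot{\tilde R}=-(\bfw_{\tt vis})_\times\tilde R$, and substituting the layer-one and layer-two limits reduces the innovation to the classical vector-correlation form $\bfw_{\tt vis}=\sum_i k_i\,\bfv_i\times(\tilde R\bfv_i)+\et$ with $\bfv_i:={}^cQ\bfr_i$. I would analyze the unperturbed system through the Lyapunov function $V=\tr\big((I_3-\tilde R)M\big)$, $M:=\sum_i k_i\bfv_i\bfv_i^\top$, which satisfies $\dot V=-|\bfw_{\tt vis}|^2\le 0$. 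Assumption \ref{ass:3ldmk} guarantees that $\{\bfr_i\}$, hence $\{\bfv_i\}$, span $\rea^3$ so that $M\succ 0$; selecting the gains $k_i$ to make the eigenvalues of $M$ distinct, the unique stable critical point is $\tilde R=I_3$ while the other three equilibria (rotations by $\pi$ about the eigenaxes of $M$) are unstable with measure-zero stable manifolds. This is the standard almost-global, locally exponential convergence $\tilde R\to I_3$, i.e. ${}^c\hat Q\to{}^cQ$.

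For the position layer and the final assembly I would set $\tilde\bfx:=\hat\bfx-\bfx$ and use $\dot\bfx=R\bfv$. The identities above collapse the correction in \eqref{observer:pose} to $\bar\ellb_i-\hat\bfx-{}^c\hat Q^\top({}^v\hat\ellb_i-\xib)\to\bfx-\hat\bfx$, so that $\dot{\tilde\bfx}=-\big(\sum_i\sigma_i\big)\tilde\bfx+\et$, a Hurwitz linear system driven by vanishing inputs, giving $\tilde\bfx\to0$ exponentially. Propagating these limits through the output maps \eqref{ldmk-obs-alg} produces $\hat R={}^c\hat Q^\top Q\to R$, $\hat\bfx\to\bfx$ (hence $\hat X\to X$), and $\hat\ellb_i={}^c\hat Q^\top({}^v\hat\ellb_i-\xib)+\hat\bfx\to\ellb_i$, which is exactly \eqref{convergence}; the qualifier ``almost'' is inherited from the measure-zero stable manifolds of the attitude layer.

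The main obstacle I expect is not any single layer but the rigorous propagation of \emph{almost}-global asymptotic stability through the nonlinear cascade on $SE(3)\times\rea^{3n}$. Layers one, two and four are globally (exponentially) stable and enter only as vanishing perturbations, but the attitude layer is merely almost-globally stable on the compact manifold $SO(3)$; I must therefore argue that the exponentially decaying $\et$ terms neither inflate the bad set beyond measure zero nor trap trajectories near the unstable equilibria. The clean route is to invoke robustness of almost-global asymptotic stability to decaying perturbations, combined with a standard cascade/ISS argument for the position loop, and to check that the perturbed critical points remain hyperbolic so their stable manifolds stay of measure zero. This reduction/robustness step, rather than the Lyapunov computations themselves, is the delicate part of the proof.
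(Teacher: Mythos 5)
Your proposal is correct and follows essentially the same route as the paper: a second copy of the mapping observer driven by the rotated regressor $R_0Q(0)^\top\phi_i$ (whose excitation is preserved under the constant full-rank rotation) gives $\bar\ellb_i\to\ellb_i$ exponentially, the attitude error with innovation $\pa\big({}^c\tilde Q^\top M\big)$, $M=\sum_i k_i\,{}^cQ\,\bfr_i\bfr_i^\top\,{}^cQ^\top$, is analyzed via the same trace-type Lyapunov function yielding almost-global convergence with measure-zero stable manifolds of the unstable equilibria, and the position error is a Hurwitz linear system perturbed by vanishing terms. The robustness step you flag as delicate is exactly what the paper dispatches by augmenting the Lyapunov function with the integral term $\int_t^\infty|\varepsilon(s)|\,ds$ to absorb the exponentially decaying perturbations and by invoking the non-autonomous Hartman--Grobman theorem to conclude that only a zero-Lebesgue-measure set of trajectories converges to the unstable equilibria.
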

\begin{proof}
The proof is given in Appendix.
\end{proof}

\vspace{0.2cm}

The above design provides an approach to deal with drift in odometry. The discretization of the proposed PEBO-SLAM observer design and its implementation procedure is summarized in Algorithm \ref{alg:1}.

\subsection{Discussions}

\begin{algorithm}[!htb]
		\caption{ Visual PEBO-SLAM (discrete time)}
		%\setstretch{1.1}
		\begin{algorithmic}[1] \label{alg:1}
			\renewcommand{\algorithmicrequire}{\textbf{Input:}} 
			\renewcommand{\algorithmicensure}{\textbf{Output:}}
			\REQUIRE  Discrete-time IMU measurements (velocities $\bfu(k)$), bearing measurements $\bfy_i(k)$ ($k\in \mathbb{N}$), sampling time $T$ and the final moment $k_{\tt max}$
			\ENSURE  $\hat{X}(k)$ and $\hat{\ellb}_i(k)$ for all $k$ and $i \in \caln$

			\STATE Initialize the observer: $X(0)= \calt(R_0,\bfx_0)$, ${}^c \hat Q(0) = I$ and $\hat \bfx(0)=\xib(0)$.

		    \WHILE{$k<k_{\tt max}$}
			\STATE Compute ${}^v X$ from the dynamic extension
			$$
			\begin{aligned}
			\xib(k+1) & ~=~ \xib(k) + T Q(k)\bfv(k)
			\\
			Q(k+1) & ~= ~Q(k)\exp(T\Omegab_\times)
			\end{aligned}
			$$
			\algocomment{${}^v X= \calt(Q,\xib)$ with $\calt\cal(\cdot)$ defined in \eqref{calt}}
			%%%%%%%%
			\STATE Run the gradient estimator to identify ${}^v\hat\ellb_i$

 \IF{the $i$-th landmark is visible}	
\STATE update the estimate ${}^v \hat{\ellb}_i$ as
$$\footnotesize
			\begin{aligned}
             {}^v\hat\ellb_i(k+1) = & {}^v\hat\ellb_i(k) 
              +{1 \over \gamma + 1}
             \left[
             \bfq_i(k+1) - \phi_i(k+1) {}^v\hat\ellb_i^v(k)
             \right]
             %\\
             %\bfq_i(k)  = & \Pi_{Q(k)\bfy_i(k)} \xib(k)
             \end{aligned}
$$
~~$\phi_i  := \Pi_{Q\bfy_i}^\top$, ~ $\bfq_i  := \Pi_{Q\bfy_i}  \xib$

\algocomment{$\alpha_i >0$ for all $i\in \caln$}

\algocomment{may be replaced by the DREM estimator \eqref{ldmk-obs}}			
	  
\ELSIF{it is invisible}
    \STATE update it as ${}^v \hat{\ellb}_i(k) = {}^v \hat{\ellb}_i(k-1)$
  \ENDIF
	
\STATE Run the localization observer
$$\small
\begin{aligned}
{}^c \hat Q(k+1) & = \exp( -T\cdot S(\bfw_{\tt vis})  ){}^c \hat Q(k) \\
{\hat \bfx}(k+1) 
&=\hat \bfx(k) + T\left[\hat R(k)\bfv (k) + \sum_{i=1}^{n} \sigma_i \big({\hat \bfz_i}(k) - \hat \ellb_i(k) \big)
\right]
\\
\hat R({k}) & = {}^c \hat Q^\top(k) Q(k)
\\
\hat \ellb_i (k) & = {}^c \hat Q^\top(k) ({}^v \hat \ellb_i(k) - \xib(k)) {+} \hat \bfx (k)
\end{aligned}
$$
with the functions
$$
\begin{aligned}
\bfw_{\tt vis} & ~ =~ 
\left\{
\begin{aligned}
&\sum_{i=1}^{n -1} k_i  \hat{\bfr}_i^v {\times} ( {}^c \hat Q \bar \bfr_i ) & \mbox{if~} n >2
\\
&0 & \mbox{otherwise}
\end{aligned}
\right.
\\
{}^v\hat \bfr_i & ~=~ {}^v\hat \ellb_{i+1} - {}^v\hat \ellb_i , \quad
\bar \bfr_i  = \bar \ellb_{i+1} - \bar \ellb_i.
%\bar \bfr_j & = \hat \bfz_{j+1} - \hat \bfz_j.
\end{aligned}
$$
\algocomment{for all $i \in \caln$}

%\algocomment{$ n_\ell$ is the number of visible landmarks at $k$ }

\STATE Run the discretized gradient estimator
$$\small
\begin{aligned}
\bar \ellb_i(k+1)  & ~=~ \bar \ellb_i(k) 
+{1 \over \gamma + 1}
\left[ 
\bfq_{x,i}(k+1) - \hat \phi_i(k+1) \bar \ellb_i(k)
\right]
\\
\bfq_{x,i}(k) & ~=~ \hat\phi_i(k) {\hat \bfx}(k)
, \quad
\hat\phi_i(k) = 
\left\{
\begin{aligned}
& \Pi_{\hat R(k)\bfy_i(k)} & & \mbox{visible}
\\
& 0_{3\times 3} & & \mbox{invisible}
\end{aligned}
\right.
\end{aligned}
$$

\STATE
\(
\hat R(k) = {}^c \hat Q^\top(k) Q(k)
\)
\STATE \( \hat \ellb_i(k) = {}^c \hat Q^\top(k) ( {}^v  \hat\ellb_i(k) - \xib(k))  + \hat \bfx(k)\)

\algocomment{for all $i \in \caln$}

\STATE $k = k+1$
\ENDWHILE
\RETURN $\hat{X}= \calt(\hat R, \hat \bfx),~\hat\ellb_i $ for all $k \in \mathbb{N}$ and $i \in \caln$.
\end{algorithmic}
\end{algorithm}

In this section we give some remarks to the proposed design.

\begin{remark}\rm
Note that in the estimator \eqref{bar_l} the predefined anchor is involved, rather than the information of $\hat X$. In the standard SLAM formulation, it is natural to impose the constraint of initializing the estimate from the anchor
 $
 \hat R(0) =R_0, ~\hat \bfx (0) = \bfx_0.
 $
 Then, in \eqref{bar_l} we may replace the ``anchor'' $(R_0, \bfx_0)$ by its estimate $(\hat R(t),\hat \bfx(t))$ at the \emph{current} moment $t$; correspondingly $\xi(0)$ and $Q(0)$ are replaced by $Q(t)$ and $R(t)$. Namely, $\bar \ellb_i$ is generated by\footnote{Roughly speaking, it can be understood that the anchor and the initial moment are defined in a ``moving manner''. The conclusion still holds true since the observer for $\hat X= \calt(\hat R, \hat \bfx)$ is asymptotically stable, by invoking that asymptotic stability implies the invariance $[\hat R(0)= R_0, \hat \bfx(0)=\bfx_0] \implies [\hat R(t)= R(t), \hat \bfx(t)=\bfx(t)]$ for all $t\ge 0$.}
\begequ
\label{bar_l2}
\begin{aligned}
\dot\eta_\Sigma &= F_\Sigma \Big(\eta_\Sigma,\phi_i^\top( Q \hat R^\top \hat\bfx),   \hat R Q^\top \phi_i  \Big)
\\
\bar\ellb_i & = N_\Sigma(\eta_\Sigma),
\end{aligned}
\endequ
 In this way, the landmark estimate $\hat \ellb_i$ in $\{\cali\}$ and the pose estimate $(\hat R, \hat\bfx )$ are in a ``feedback close-loop''.
\end{remark}

\begin{remark}
\rm 
From the fourth item of Theorem \ref{prop:mapping}, we note that the convergence rate is proportional to the gain $\gamma_i$ after the moment $(t_0+t_c+\tau_\star)$. The fact that we need to wait a moment to tune the observer is intuitive, since the $i$-th feature point gets sufficient excitation after $t_0+t_c$. Due to favorable decoupling property, we may easily tune the convergence rate of each element ${}^v\tilde \ell_{i,j}$ separately.
\end{remark}

\begin{remark}
\rm\label{remark:ltvkf}
As illustrated before, the SLAM problem is concerned with two tasks, i.e., mapping and localization. If we are only interested in mapping, Theorem \ref{prop:mapping}, indeed, provides a feasible solution to it. A similar problem has been comprehensively studied in \cite{LOUetal} via transforming the system into an LTV system, and then solved by LTV Kalman filter, in which the pose estimation is not involved, and the landmark coordinates ${}^{\calb} \ellb_i$ in the body frame $\{\calb\}$ are selected as the system state. In that case, the landmark coordinates  $\li^{\calb}$ keep ``rotating'' when the algorithm is running. In contrast, a favorable property of the landmark observer in Theorem \ref{prop:mapping} relies on that the landmark coordinates ${}^v \ellb_i$ are \emph{stationary}.
\end{remark}

\begin{remark}
\rm
Algorithm \ref{alg:1} is applicable to the two-dimensional case with slight modifications. For this case, $\bfv\in \rea^2, \Omegab \in \rea, \bfx\in \rea^2$ and $R\in SO(2)$. The algorithm is modified by defining the operation $\mathbf{a}\times \mathbf{b} = a_1b_2- a_2b_1$ for $\mathbf{a},\mathbf{b} \in \rea^2$, and the operator
$
\Omegab_\times = \begmat{0 & - \Omegab \\ \Omegab & 0}.
$
\end{remark}

%============

\begin{remark}\rm
In stability analysis of the localization observer, we require that at least three feature points visible at all the time to guarantee the asymptotic convergence of the attitude estimate. If this cannot be verified in some intervals, we may also involve the feature points in recent past for the function $\vbf{\omega}_{\tt vis}$, since both $\bfr_i$ and ${}^v \bfr_i$ are constant vectors.
\end{remark}

%============

\begin{remark}\rm
In the proposed observer, we utilize the vectors $\bar\bfr_i$ and their corresponding estimates $\hat\bfr_i$ to generate the error correction term, as illustrated in \eqref{observer:pose}-\eqref{ldmk-obs-alg}. Specially, we define the vectors as $\bar\bfr_i  := \bar\ellb_{i+1} - \bar\ellb_i$, which represent a ``minimal'' construction ensuring convergence conditions as proven in our analysis. However, it is worth noting that alternative approaches can be explored for selecting these vectors among landmarks. One such approach involves considering a ``fully connected graph" among all visible landmarks, which could potentially enhance robustness but at the expense of significantly increased computational burden, growing quadratically. Further investigation into the construction of these vectors among landmarks would be valuable and warrants future research.
\end{remark}
%====================================

\section{Simulations}
\label{sec5}

In this section, we give some simulation results to demonstrate effectiveness of the proposed observer, considering the cases with and without sufficient excitation.

\subsection{With sufficient excitation}
\label{sec:51}

We first consider a scenario that the robot keeps moving, such that the PE condition is satisfied for all feature points. A robot is simulated from the point $\bfx(0)=[1,1,2]^\top$ and the initial attitude $R(0)= [\cos({\pi\over 6}), -\sin({\pi \over 6}), 0; \sin({\pi \over 6}), \cos({\pi \over 6}), 0 \\ 0 , 0, 1]$
%$$
%R(0) = \begmat{\cos({\pi\over 6}) & -\sin({\pi \over 6}) & 0\\ \sin({\pi \over 6}) & \cos({\pi \over 6}) & 0 \\ 0 & 0 & 1}
%$$
with the velocity $\bfv(t) = [1, 0, 0]^\top$ and $\Omegab(t) = [0,0,-0.4]^\top$; see Figs. \ref{fig:1-1} and \ref{fig:1-3} for the pose trajectory. 
\begin{figure}[!htb]
    \centering
    \subfigure[Robot trajectory $\bfx$ and its estimate]{
    \includegraphics[width=0.235\textwidth]{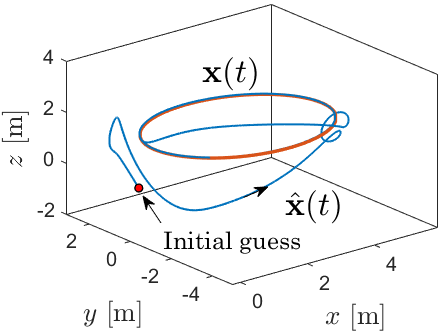}
    \label{fig:1-1}
    }
    \hspace{-0.5cm}
    \hfill
    \subfigure[Pose estimation error $\tilde\bfx$]{
    \includegraphics[width=0.235\textwidth]{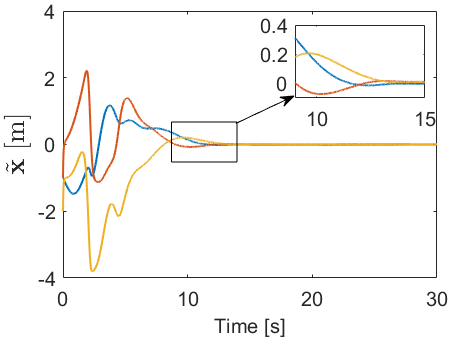}
    \label{fig:1-2}
    }
    %===========
    \subfigure[Attitude $R$ and its estimate]{~~~~~
    \includegraphics[width=0.17\textwidth]{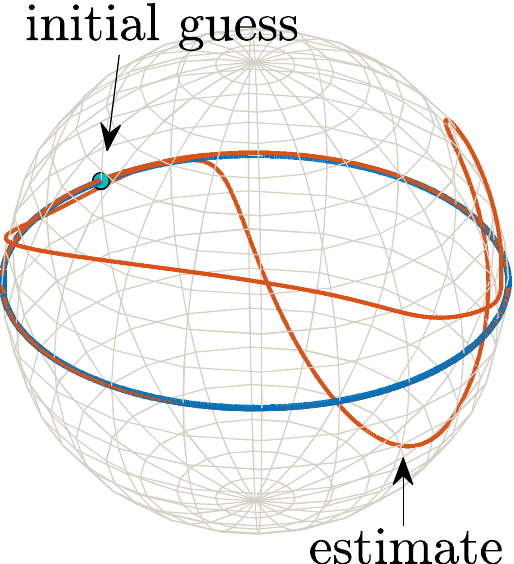}
    \label{fig:1-3}
    }
    \hfill
    \subfigure[Attitude estimation error $\tilde R$]{
    \includegraphics[width=0.235\textwidth]{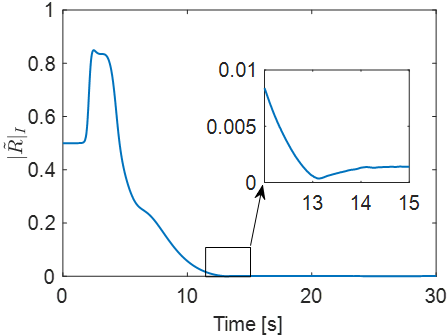}
    \label{fig:1-4}
    }
    \caption{Simulation results with the robot trajectory satisfying PE}
    \label{fig:simulation1}
\end{figure}
We consider six landmarks in simulation, and assume that they are measured all the time. The initial conditions are set as
$$\small
Q(0) = \begmat{\cos({\pi\over 2}) & -\sin({\pi \over 2}) & 0\\ \sin({\pi \over 2}) & \cos({\pi \over 2}) & 0 \\ 0 & 0 & 1},
~
{}^c \hat Q(0) = I_3,
~
\xib(0) = \begmat{0 \\ 1 \\1 },
$$
and $\hat \bfx(0) = 0_{3\times 1}$. The observer gains are selected as $\alpha_i = 5, \gamma _i =100, k_{\tt I}^i = 5$ for $i\in \caln$, and $\rho_j = 1$ for $j=1,2,3$. The simulation was done in Matlab/Simulink with the fixed step size 0.001 seconds. To evaluate robustness of the observer, as well as to make simulations more realistic, we added noise to the measured velocities and the bearing.

First we show the simulation results of the mapping observer in the dynamic extension frame in Fig. \ref{fig:1-5}, verifying the theoretical results in Theorem \ref{prop:mapping}, in particular, the element-wise monotonicity providing favorable performance. This property is also illustrated by Fig. \ref{fig:1-8}, which shows the evolution of each element ${}^v \tilde{\ell}_{i,j}$ the estimation error for the second landmark. Then, we evaluate the performance of the localization observer in Theorem \ref{prop:pose-observer} with the simulation results given in Figs. \ref{fig:simulation1}, which illustrates the robot pose $(R,\bfx)$ and its estimate. 
%=======================
\begin{figure}[!htb]
    \centering
    %===========
    \subfigure[Landmark estimates ${}^v\hat\ellb_i$ in $\{\calv\}$]{
    \includegraphics[width=0.235\textwidth]{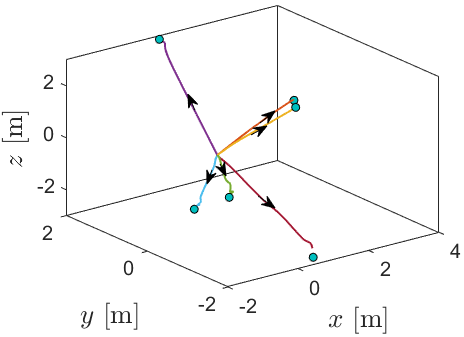}
    \label{fig:1-5}
    }
    \hspace{-0.5cm}
    \hfill
    \subfigure[Landmark estimates $\hat\ellb_i$ in $\{\cali\}$]{
    \includegraphics[width=0.235\textwidth]{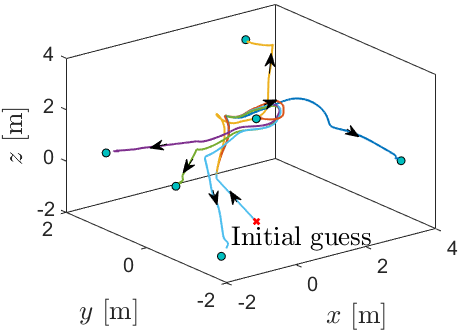}
    \label{fig:1-6}
    }
    %===========
    \subfigure[Landmark estimation error in $\{\calv\}$]{
    \includegraphics[width=0.235\textwidth]{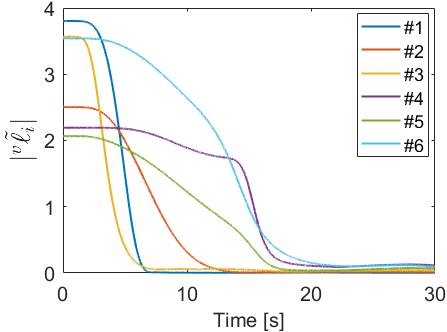}
    \label{fig:1-7}
    }    
    \hspace{-0.5cm}
    \hfill
    \subfigure[Error for the 2nd landmark]{
    \includegraphics[width=0.235\textwidth, height =3.2cm]{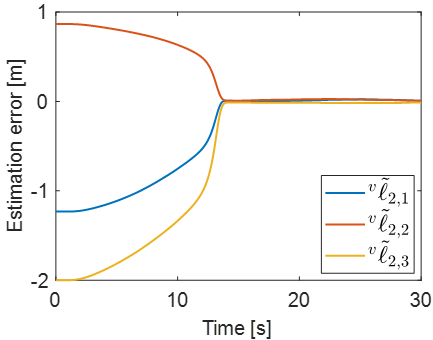}
    \label{fig:1-8}
    }    
    \caption{Feature points estimation under the PE condition}
    \label{fig:simulation2}
\end{figure}
%=======================
Note that we adopt the metric $|\tilde R|_I $ with $\tilde R:= R \hat R^\top$ to characterize the attitude estimation error with the definition $|\tilde R|_I^2 := {1\over4} \tr(I_3 - \tilde R)$. For a better illustration, we visualize in Fig. \ref{fig:1-3} presenting the attitude trajectory and its estimate by means of the unit vector along $x$-axis, i.e., $e_1 = [1, 0, 0]^\top$, of the body-fixed frame, which is plotted in the inertial frame. Note that the attitude can be uniquely determined in terms of right-handed coordinate. As expected, after a short transient stage the pose estimate asymptotically converges to its true value with a small ultimate error, which is caused by measurement noise. It also shows the robustness of the landmark observer w.r.t. high-frequency measurement noise. The estimate of the landmark coordinates in the inertial frame $\{\cali\}$ are given in Fig. \ref{fig:1-6}. 
\begin{figure}[!htb]
    \centering
    \subfigure[The effect to $|{}^v \tilde{\vbf{\ell}}|$]{
    \includegraphics[width=0.235\textwidth]{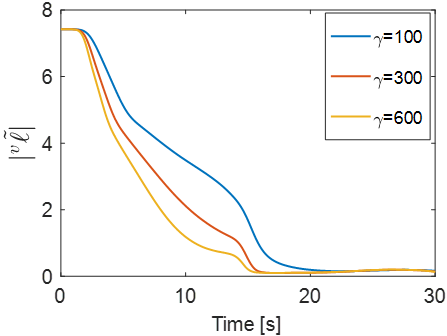}  \label{fig:s9}
    }
    \hspace{-0.5cm}
    \hfill
    \subfigure[The effect to $\tilde R$]{
    \includegraphics[width=0.235\textwidth]{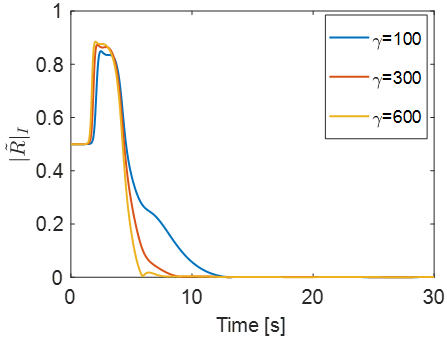}  \label{fig:s8}
    }
    \caption{The effect of parameter $\gamma$ to the convergence rate}
    \label{fig:parameter}
\end{figure}
At the end, we show the effect of the gain parameters $\gamma_i$ to the convergence rates for the estimates ${}^v\hat{\ellb}$ and $\hat R$. In Fig. \ref{fig:s9}, it verifies the theoretical analysis in the fourth term of Theorem \ref{prop:mapping}, {i.e.}, a larger gain implying a faster convergence rate. However, it cannot be arbitrarily fast since we need to wait a while to obtain sufficient excitation. On the other hand, the estimation errors ${}^v \tilde\ellb_i$ appear in the dynamics of $\tilde R$, and thus tuning $\gamma_i$ also affects the convergence rate of $\tilde R$.

%%%%%%%%%%%

\subsection{Not satisfying the PE condition}

To illustrate the merits of the proposed design, in this section we consider robot trajectories without sufficient excitation. Specifically, all the scenarios are adopted from that in Section \ref{sec:41} but with a different trajectory of the robot. In this case, we assume that the robot stopped at $t=12$s with velocity
$$\footnotesize
\bfv=
\left\{
\begin{aligned}
{}	&[1,0,0]^\top & t\in [0,12]\\
	&0_{3\times1}, &  t\ge 12
\end{aligned}
\right.
,
\quad
\Omegab = \left\{
\begin{aligned}
{} & [0,0,-0.4]^\top, &t \in [0,12]\\
& 0_{3\times 1}, & t\ge 12 	
\end{aligned} \right.
.
$$
It guarantees that all the landmarks satisfy the IE condition with $t_0=0$ and $t_c=12$s, but not guaranteeing PE. The observer gains are selected as $\alpha_i = 5, \gamma _i =100, k_{\tt I}^i = 20$ for $i\in \caln$, and $\rho_j = 1$ for $j=1,2,3$. From Figs. \ref{fig:simulation2}-\ref{fig:simulation-ldmk} we observe that the pose estimation has a satisfactory performance, even when the PE condition are not verified.  The pose trajectory is given in Figs. \ref{fig:2-1} and \ref{fig:2-3}. We underline that this scenario can be considered mathematically equivalent to the situation where the landmarks become invisible after 12 seconds, after which the excitation condition does not hold anymore. Indeed, this is more common in various SLAM problems. 

\begin{figure}[!htb]
    \centering
    \subfigure[Robot trajectory $\hat \bfx$ and its estimate]{
    \includegraphics[width=0.235\textwidth]{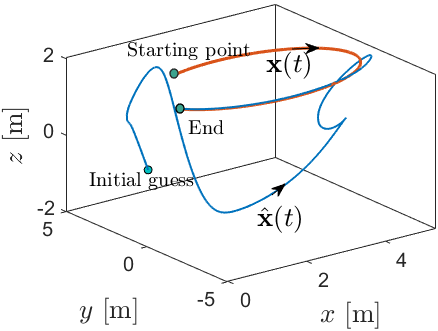}
    \label{fig:2-1}
    }
    \hspace{-0.5cm}
    \hfill
    \subfigure[Pose estimation error $\tilde\bfx$]{
    \includegraphics[width=0.235\textwidth]{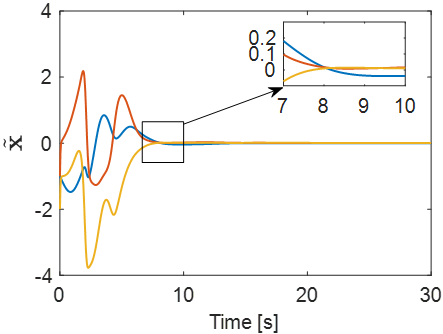}
    \label{fig:2-2}
    }
    %===========
    \subfigure[Attitude $R$ and its estimate]{~~~
    \includegraphics[width=0.19\textwidth]{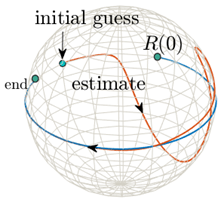}
    \label{fig:2-3}
    }
    \hfill
    \subfigure[Attitude estimation error $\tilde R$]{
    \includegraphics[width=0.235\textwidth]{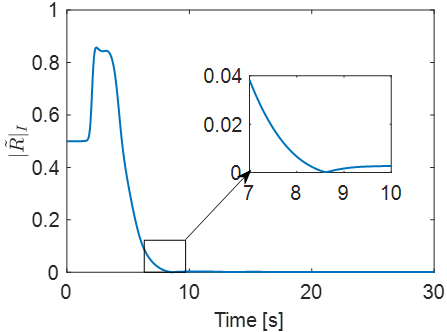}
    \label{fig:2-4}
    }
    \caption{Simulation results with the robot trajectory not satisfying PE}
    \label{fig:simulation2}
\end{figure}
\begin{figure}[!htb]
   \centering
   \subfigure[${}^v\hat{\ellb}_i$ in the dynamic extension frame]{
   \includegraphics[width=0.23\textwidth]{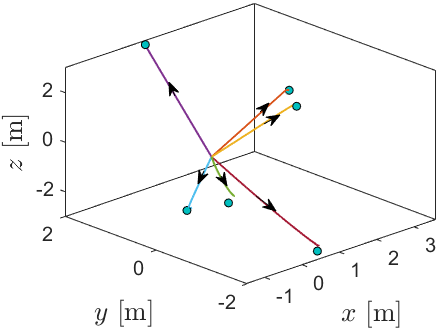}
   \label{fig:c1}
   }
   \hspace{-0.5cm}
   \hfill
   %%%%
   \subfigure[$\hat{\ellb}_i$ in the inertial frame]{
   \includegraphics[width=0.23\textwidth]{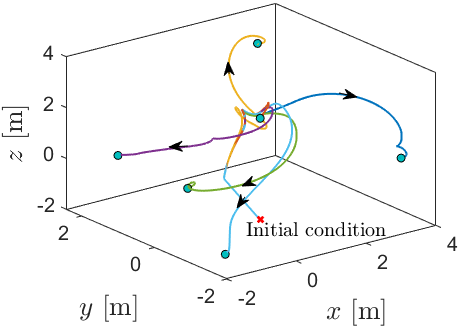}
   \label{fig:c2}
   }
    \caption{Landmark estimates in the dynamic extension and the inertial frames without PE }
    \label{fig:simulation-ldmk}
\end{figure}

We finally compare the proposed design to the robo-centric SLAM observer in \cite{LOUetal}, which requires the system trajectory being UCO. Indeed, the observer in \cite{LOUetal} studies the case that the feature coordinates are expressed in the body-fixed frame $\{\calb\}$, i.e.,
$
{}^{\calb}\ellb_i = R^\top \left( \ellb_i  - \bfx \right) ,
$
which are time-varying signals. In order to make a fair comparison, we plot the evolution of the \emph{norms} of the observation errors, i.e., $|\tilde \ellb_i|$ of the proposed design and $|{}^{\calb}\tilde{\ellb}_i|$ for the one in \cite{LOUetal}, since rotation does not affect norms. As shown in Fig. \ref{fig:comparison}, the proposed observer guarantees the estimates converging to some relatively small neighbourhoods of their true values in the presence of only the IE condition, showing good robustness against noise. In contrast, Fig. \ref{fig:c8} shows that the estimates from the observer in \cite{LOUetal} stop converging at the moment $t_0+t_c=12$s with large steady-state errors, since the UCO condition required in LTV Kalman filter cannot be satisfied. It is interesting to note that the estimates diverge from that moment due to the accumulation of noise, which is conspicuous by its absence in our design.

%%%%%%%%%%%%%%%

\begin{figure}[!htb]
   \centering
   %%%%
   \subfigure[Landmark estimation errors $|\tilde{\ellb}_i|$ of the proposed design in the inertial frame]{
   \includegraphics[width=0.215\textwidth]{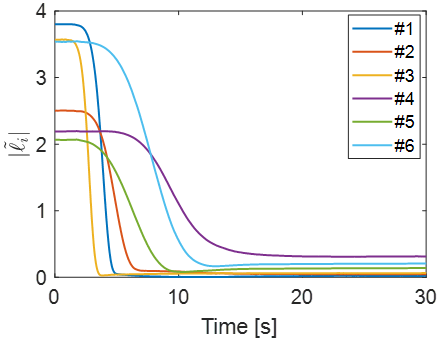}
   \label{fig:c7}
   }
   ~
   \hfill
   %%%
    \subfigure[Landmark estimation errors $|{}^{\calb}\tilde \ellb_i|$ of the LTV Kalman filter \cite{LOUetal} in the body-fixed frame]{
   \includegraphics[width=0.215\textwidth]{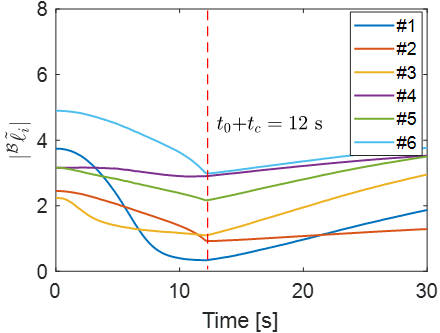}
   \label{fig:c8}
   }
  %%%%%% 
   \caption{Comparison between the proposed observer and the LTV Kalman filter in \cite{LOUetal}}
    \label{fig:comparison}
\end{figure}

{~}

\section{Conclusion}
\label{sec6}

In this paper we have introduced a novel visual inertial SLAM observer framework called PEBO-SLAM. A key observation is that the coordinates of feature points in the dynamic extension frame are constant, based on which we are able to get a set of linear regression models, and then transform the problem into online parameter estimation. To address this, we extend the PEBO methodology to the manifold $SE(3)\times \rea^{3n}$, the unknown ``parameters'' in our context being the landmark coordinates ${}^v \ellb_i$ together with the constant relative rigid transformation ${}^c X$. A simple constructive solution is also provided, with guaranteed almost global asymptotic stability, while significantly relaxing the strong PE or UCO-type conditions required in the existing literature. As future work, it is of practical interests to study the performance limitation from gyro noise and bias, as well as how to robustify the proposed observer. The final remark is that, as a byproduct, the proposed method can be also used as an initialization algorithm due to the globality and convexity.

\section{Acknowledgement}
Partial work has been done when the second author was with I3S-CNRS, France. He appreciates the fruitful discussions with Prof. Tarek Hamel.

\appendix

\subsection{Proof of Theorem \ref{prop:pose-observer}}

The proof consists of two parts: 1) illustrating the exponential convergence of $|\bar \ellb_i - \ellb_i| \to 0$ as $t\to\infty$, which will be used to synchronize $\hat X$ to $X$; and 2) showing that $\hat \ell_i$, $\hat \bfx$ and $\hat R$ converges to their true values in the inertial frame.

{\em Part 1}. Since the anchor is fixed at $X(0)=\calt(R_0,\bfx_0)$, recalling \eqref{lre1} it is easy to get the LREs
\begin{equation}
\begin{aligned}
 & \quad \vbf{q}_i =  \Pi_{Q\bfy_i} {}^v\ellb_i
 \\
  \overset{\eqref{const:ldmk}}\implies & \quad \vbf{q}_i =  \Pi_{Q\bfy_i} \Big[ {}^c\xib + {}^cQ \ellb_i \Big], 
 \\
 \overset{\eqref{id:1}}\implies & \quad
 \vbf{q}_i = \Pi_{Q\bfy_i} \Big[ \xib(0) - Q(0)R_0^\top \bfx_0 + Q(0) R_0^\top \ellb_i \Big].
 \end{aligned}
\end{equation}
Then, we have the linear regression model
$$
\phi_i^\top(\xib - \xib(0)+ Q(0) R_0^\top \bfx_0) =  [ R_0 Q(0)^\top \phi_i]^\top \ellb_i.
$$
It is straightforward to verify 
$$
\phi_i \in \mbox{~PE} \quad \implies \quad [R_0 Q(0)^\top \phi_i]\in \mbox{~PE}
$$
by recalling the full-rankness of $R_0$ and $Q(0)$. Using the same arguments in Theorem \ref{prop:mapping}, we able to show that for $i \in \caln$
\begequ
\label{converge:barl2l}
\lim_{t\to\infty} |\bar\ellb_i(t) - \ellb_i| = 0 \quad \mbox{(exp.)}.
\endequ

Hence, the variables ($\bar \ellb_i$)s can be viewed as ``known'' vectors in $\{\cali\}$. Then, we may generate an error between $\bar \ellb_i$ and ${}^v\hat\ellb_i$ to synchronize $\hat X$ to $X$ asymptotically, and thus we are able to get the estimates of all landmarks in the inertial frame. To be precise, we need to show the asymptotic convergence of ${}^c \hat X= \calt({}^c \hat Q,{}^c \hat \xib) \in SE(3)$ to its true value ${}^c X$ defined in Lemma \ref{lem:1}, which is sufficient to show $\hat X \to X$ in terms of the identity \eqref{id:1} for the proposed PEBO.

{\em Part 2}. Define the estimation error of ${}^cQ$ as 
$$
{}^c\tilde Q := {}^c Q {}^c \hat Q^\top,
$$
and then 
\begequ
\label{tilde_qc}
\begin{aligned}
	{}^c \dot{\tilde Q}
	%& 
	= {}^c Q {}^c \dot{\hat Q}^\top 
	%\\
	%& = - Q_c {}^c \hat Q^\top (\vbf{w}_{\tt vis})_\times^\top
	%\\
	%& 
	= {}^c \tilde Q (\vbf{w}_{\tt vis})_\times.
\end{aligned}
\endequ
We also define a \emph{nominal} correction term
\begin{equation}
\label{w*}
\vbf{w}_{\tt vis}^* = \sum_{i =1}^{n-1} k_i \bfr_i^v \times ({}^c \hat Q \bfr_i),
\end{equation}
and we will show that the difference $|\bfw_{\tt vis}^* -\bfw_{\tt vis}|\to 0$ exponentially as $t\to 0$. It is easy to see
$$
\begin{aligned}
(\bfw_{\tt vis}^*)_\times & ~=~ \hal \sum_{i =1}^{n-1} k_i \left[ {}^c \hat Q \bfr_i ({}^v \bfr_i)^\top - {}^v \bfr_i ({}^c \hat Q \bfr_i)^\top \right]
\\
& ~=~ \sum_{i =1}^{n-1} k_i \pa \left( {}^c \hat Q \bfr_i \bfr_i^\top {}^cQ^\top \right)
\\
& ~=~ \sum_{i =1}^{n-1} k_i \pa \left({}^c\tilde{Q}^\top {}^cQ \bfr_i \bfr_i^\top {}^cQ^\top \right)
\\
& ~=~ \pa \left( {}^c\tilde Q^\top M \right),
\end{aligned}
$$
with $M:= \sum_i^{n -1} k_i {}^cQ \bfr_i {\bfr_i^\top} {}^cQ^\top$, and in the second equation we have used
$$
\begin{aligned}
{}^v\bfr_i 
& = {}^v\ellb_{i+1} - {}^v\ellb_i
\\&
 = {}^c\xib  + {}^cQ \ellb_{i+1} - ({}^c\xib + {}^cQ \ellb_i) 
\\& 
= {}^c Q \bfr_i.
\end{aligned}
$$
We underscore that $M$ is a constant symmetric matrix, which is positive semidefinite for $n_\ell =3$, and is positive definite for $n_\ell >3$ due to Assumption \ref{ass:3ldmk}.

On the other hand, since\footnote{With a slight abuse of notation, we use $\et$ to represent some exponentially decaying {\em generally}, and do not distinguish which specific signals they refer to in order to simplify presentation.}
$$
\begin{aligned}
{}^v\hat \bfr_i 
%& = {}^v\hat \ellb_{j+1}(t) - {}^v\hat \ellb_i(t) 
%\\
& 
~=~ {}^v\ellb_{i+1} - {}^v \ellb_i  + {}^v\tilde{\ellb}_{i+1}(t) - {}^v\tilde{\ellb}_i(t)
\\ &
 ~=~ {}^v \bfr_i + \et
\end{aligned}
$$
and $\bar \bfr_i = \bfr_i + \et$, invoking the compactness on $SO(3)$ we have
\begin{equation}
\label{w&w*}
\bfw_{\tt vis} = \bfw_{\tt vis}^* + \et.
\end{equation}

From \eqref{tilde_qc} and \eqref{w&w*}, we have
\begequ
\label{tilde_qc2}
{}^c\dot{\tilde Q} = {}^c \tilde Q \pa\left( {}^c \tilde{Q}^\top M  \right) + \cale(t) =: f_q({}^c \tilde  Q, t),
\endequ
in which $\cale(t)$ represents some exponentially decaying signal, invoking compactness of the space which ${}^c\tilde Q$ lives in. Then, we have
$
|\tr(\cale^\top (t)M)| \le a_0 e^{-a_1 t} =: \varepsilon(t),
$
for some $a_0,a_1>0$. Similarly to the proof of \cite[Prop. 6]{YIetalAUT}, we consider the time-varying Lyapunov function candidate
$$
V({}^c \tilde Q,t) = \sum_{j=1}^{n_\ell -1} k_i |\bfr_i|^2 - \tr({}^c \tilde Q^\top M) +  \int_t^\infty | \varepsilon (s)| ds,
$$
which is well posed since $\varepsilon$ is absolutely integrable. Its time derivative satisfies $\dot V \le - \big\| \pa \big({}^c \tilde Q^\top M \big) \big\|^2$.
%$$
%\begin{aligned}
%\dot V & = - \tr\left( \pa(\tilde Q_c^\top M)^\top \tilde Q_c^\top M + \cale^\top  M \right) - |\varepsilon(t)|
%\\
%&= - \big\| \pa \big(\tilde Q_c^\top M \big) \big\|^2 - \tr\big(\cale^\top M\big)
% -|\varepsilon(t)|
%\\
%&
%\le - \big\| \pa \big(\tilde Q_c^\top M \big) \big\|^2.
%\end{aligned},
%$$
%in which we have used \eqref{tr<epsilon} in the last inequality. 
Then, it yields
$
\int_0^\infty \big\| \pa \big({}^c\tilde Q^\top(t) M \big) \big\|^2 dt < +\infty.
$
Using Babalat's lemma, we conclude that all the trajectories converge to the invariant set
$
\Omega_e := \{{}^c \tilde Q \in SO(3) ~|~ \pa \big({}^c \tilde Q^\top M \big) =0\},
$
which can be shown containing a locally exponentially stable equilibrium ${}^c \tilde Q = I_3$, and other three isolated unstable equilibria $\bq_i$ ($i=1,2,3$) via a similar procedure of the proof in \cite[Thm. 5.1]{MAHetal}.
%\begequ
%\label{re-lambda}
%\underset{i=1,2,3}{\max}{\rm Re} \left(\lambda_i \big\{ D %f_q(\bq_i) \big\}\right) >0,
%\endequ
%where ${\rm Re}(\cdot)$ represents the real part of a complex number, and $\lambda_{i}(\cdot)$ is the eigenvalues of a matrix. 
From the non-autonomous version of Hartman-Grobman theorem \cite{AULWAN}, the dynamics \eqref{tilde_qc2} is topologically equivalent to an LTV dynamics in a small neighborhood of these three unstable equilibria. As a result, only some very specific trajectories, from a zero Lebesgue measure set $\calm_\epsilon$, ultimately converge to the unstable equilibria, thus yielding the almost GAS of the error dynamics \eqref{tilde_qc2}. As a result, 
$
\lim_{t\to\infty}\|\hat R(t) - R(t)\| = 0
$
with $\hat R \in SO(3)$.

After verifying convergence of the attitude estimate $\hat R(t)$, we then need to study the position estimate $\hat \bfx$. To address this, let us define the error vector of $\bfx$ as $\tilde \bfx:= \hat \bfx - \bfx$, and then 
$$
\begin{aligned}
\dot{\tilde \bfx} & ~= ~(\hat R - R) \bfv + \sum_{i =1}^{n-1} \sigma_i \big(\bar \ellb_i - \hat \bfx - {}^c \hat Q^\top(\ell_i^v - \xib) + \et \big)
\\
& ~= ~ (\hat R - R) \bfv  + \sum_{i =1}^{n-1} \sigma_i \big(\ellb_i - \hat \bfx  - {}^c \hat Q^\top Q R^\top (\ellb_i - \bfx) + \et \big)
\\
& ~=~  (\hat R - R) \bfv  + \sum_{i =1}^{n-1} \sigma_i \big(\ellb_i - \hat \bfx  - \tilde R^\top (\ellb_i - \bfx) +\et \big)
\\
%& =  (\hat R - R) v(t)  + \sum_{j=1}^{n_\ell} \sigma_i \big(z_i - \hat x  - \tilde R^\top (z_i - x) +\et \big)
%\\
& ~=~ - \sum_{i =1}^{n-1} \sigma_i \tilde \bfx +  (\hat R - R) \bfv + 
 \sum_{i =1}^{n-1} \sigma_i \big((I-\tilde R^\top)(\ellb_i - \hat \bfx) \big)
 \\
& ~:=~ - \sum_{i =1}^{n-1} \sigma_i \tilde \bfx + \Delta_{\tt p}(t).
\end{aligned}
$$
From $\|\hat R(t) - R(t)\| \to 0$ and the boundedness of $\bfv(t)$, we have the asymptotic convergence condition $|\Delta_{\tt p}|\to 0$. The dynamics of $\tilde \bfx$ can be viewed as a stable LTI system perturbed by a vanishing term $\Delta_{\tt p}$, and thus following the standard perturbation analysis on Euclidean space we have
$$
\lim_{t\to \infty} |\hat \bfx(t) - \bfx(t) | = 0.
$$

At the end, we show the landmark estimation $\hat \ellb_i$ will converge to their true values ultimately. Invoking the facts $\xib(t) = {}^c\xib + {}^c Q \bfx(t)$, we have
$$
\begin{aligned}
 \li & ~=~ {}^c Q^\top ( {}^v \ellb_i - {}^c\xib) 
\\
& ~=~ {}^c Q^\top ({}^v \ellb_i - \xib + {}^c Q  \bfx)
\\
& ~=~ {}^c Q^\top( {}^v \ellb_i - \xib) + \bfx.
\end{aligned}
$$
Therefore, the landmark estimates $\hat \ellb_i$ in \eqref{ldmk-obs-alg} guarantee
$
\lim_{t\to\infty} |\hat \ellb_i(t) -  \ellb_i| =0.
$
It completes the proof.
\QED

\bibliographystyle{abbrv}
\bibliography{vslam_obs}

\end{document}